\newif\ifdraft\draftfalse
\newif\iffull\fulltrue
\title{Almost Sure Productivity\footnote{%
    \iffull This is the full version of the paper.
    \else The full version of this paper can be found at
    \url{https://arxiv.org/abs/1802.06283}.
    \fi}}
    \author{Alejandro Aguirre}{IMDEA Software Institute, Madrid, Spain}{}{}{}
    \author{Gilles Barthe}{IMDEA Software Institute, Madrid, Spain}{}{}{}
    \author{Justin Hsu}{University College London, London, UK}{}{}{}
    \author{Alexandra Silva}{University College London, London, UK}{}{}{}
\authorrunning{A.\, Aguirre, G.\, Barthe, J.\, Hsu, A.\, Silva} 
\subjclass{
F.3.1 Specifying and Verifying and Reasoning about Programs, F.3.2 Semantics of Programming Languages, D.3.1 Formal Definitions and Theory} 
\keywords{Coinduction, Probabilistic Programming, Productivity}
\newif\ifshort
\tikzset{
    buffer/.style={
        draw,
        regular polygon,
        regular polygon sides=3
    }
}
\newcommand\triangleS[1]{\begin{tikzpicture}
\node[buffer]{#1};
\end{tikzpicture}}
\tikzstyle{output}=[circle, draw]
\newcommand\klar[2]{\ar[#1]|-{\circ}^-{#2}}
\newcommand\klarr[2]{\ar[#1]|-{\circ}_-{#2}}
\newcommand{\EE}{\mathbb{E}}
\newcommand{\NN}{\mathbb{N}}
\newcommand{\RR}{\mathbb{R}}
\newcommand{\TT}{\mathbb{T}}
\newcommand{\Dist}{\mathcal{D}}
\newcommand{\cF}{\mathcal{F}}
\newcommand{\finally}{\lozenge}
\newcommand{\always}{\square}
\newcommand{\paths}{\mathrm{Paths}}
\newcommand{\cA}{\mathcal{A}}
\newcommand{\cB}{\mathcal{B}}
\newcommand{\cC}{\mathcal{C}}
\newcommand{\cO}{\mathcal{O}}
\newcommand{\cS}{\mathcal{S}}
\newcommand{\cT}{\mathcal{T}}
\newcommand{\nextt}[1]{\mathcal{X} #1}
\newcommand{\until}[2]{#1 \mathbin{\mathcal{U}} #2}
\newcommand{\cons}[2]{#1 : #2}
\newcommand{\sem}[1]{\llbracket #1 \rrbracket}
\newcommand{\cat}[2]{#1 \cdot #2}
\newcommand{\Trees}{\mathsf{Trees}}
\newcommand{\lt}{\mathsf{left}}
\newcommand{\rt}{\mathsf{right}}
\newcommand{\hd}{\mathsf{head}}
\newcommand{\tl}{\mathsf{tail}}
\newcommand{\Kl}{\mathcal{K}\ell}
\newcommand{\out}{\mathsf{out}}
\newcommand{\unf}{\mathsf{unf}}
\newcommand{\OS}{\mathsf{OS}}
\newcommand{\OT}{\mathsf{OT}}
\newcommand{\step}{\mathsf{st}}
\newcommand{\inl}{\mathit{inl}}
\newcommand{\inr}{\mathit{inr}}
\newcommand{\mkt}{\mathsf{mk}}
\newcommand{\trans}{\cT}
\newcommand{\subterm}[1]{\cS_{#1}}
\newenvironment{subproof}[1][\proofname]{%
  \begin{proof}[#1]%
}{%
  \end{proof}%
}
\begin{document}

\maketitle

\begin{abstract}
We introduce \emph{Almost Sure Productivity (ASP)}, a probabilistic
generalization of the productivity condition for coinductively defined
structures. Intuitively, a probabilistic coinductive stream or tree is ASP if it
produces infinitely many outputs with probability $1$. Formally, we define ASP
using a final coalgebra semantics of programs inspired by Kerstan and K\"onig.
Then, we introduce a core language for probabilistic streams and trees, and
provide two approaches to verify ASP: a syntactic sufficient criterion, and a
decision procedure by reduction to model{-}checking LTL formulas on
probabilistic pushdown automata.
\end{abstract}

\section{Introduction} \label{sec:intro}
The study of probabilistic programs has a long history, especially in connection
with semantics~\cite{DBLP:journals/jcss/Kozen81} and
verification~\cite{DBLP:conf/stoc/Kozen83,DBLP:journals/toplas/HartSP83,DBLP:journals/toplas/MorganMS96}.
Over the last decade the field of probabilistic programming has attracted
renewed attention with the emergence of practical probabilistic programming
languages and novel applications in machine learning, privacy-preserving data
mining, and modeling of complex systems. On the more theoretical side, many
semantical and syntactic tools have been developed for verifying probabilistic
properties. For instance, significant attention has been devoted to termination
of probabilistic programs, focusing on the complexity of the different
termination classes~\cite{DBLP:conf/mfcs/KaminskiK15}, and on practical methods
for proving that a program
terminates~\cite{DBLP:conf/popl/FioritiH15,DBLP:conf/esop/LagoG17,DBLP:journals/pacmpl/AgrawalC018,DBLP:journals/pacmpl/McIverMKK18}.
The latter class of works generally focuses on \emph{almost sure termination},
which guarantees that a program terminates with probability 1.

Coinductive probabilistic programming is a new computational paradigm that
extends probabilistic programming to infinite objects, such as streams and
infinite trees, providing a natural setting for programming and reasoning about
probabilistic infinite processes such as Markov chains or Markov decision
processes. Rather surprisingly, the study of coinductive probabilistic
programming was initiated only recently~\cite{DBLP:conf/esop/AguirreBBBGG18},
and little is known about generalizations of coinductive concepts and methods to
the probabilistic setting. In this paper we consider \emph{productivity}, which
informally ensures that one can compute arbitrarily precise approximations of
infinite objects in finite time. Productivity has been studied extensively for
standard, non-probabilistic coinductive
languages~\cite{DBLP:conf/popl/HughesPS96,DBLP:journals/tcs/EndrullisGHIK10,DBLP:conf/popl/AbelPTS13,DBLP:journals/corr/CloustonBGB16,DBLP:conf/lics/CaprettaF17},
but the probabilistic setting introduces new subtleties and challenges.

\paragraph*{Contributions}
Our first contribution is conceptual. We introduce \emph{almost sure
productivity} (ASP), a probabilistic counterpart to productivity.
A probabilistic stream computation is almost
surely productive if it produces an infinite stream of outputs with probability
1. For instance, consider the stream defined by the equation
$$\sigma = (a : \sigma) \oplus_p \sigma$$
Viewed as a program, this stream repeatedly flips a coin with bias $p \in (0, 1)$, producing the value $a$
if the coin comes up heads and retrying if the coin comes up tails. This
computation is almost surely productive since the probability it
fails to produce outputs for $n$ consecutive steps is $(1 - p)^n$, which tends to zero as
$n$ increases. In contrast, consider the stream
$$\sigma = \bar{a} \oplus_p \epsilon$$
This computation flips a single biased coin and returns an infinite stream of
$a$'s if the coin comes up heads, and the empty stream $\epsilon$ if the coin
comes up tails. This process is \emph{not} almost surely productive since its
probability of outputting an infinite stream is only $p$, which is strictly less
than $1$.

We define almost sure productivity for any system that can be equipped with a
final coalgebra semantics in the style of Kerstan and
K\"onig~\cite{DBLP:journals/corr/KerstanK13} (Section~\ref{sec:def-asp}). We
instantiate our semantics on a core probabilistic language for computing over
streams and trees (Section~\ref{sec:lang}). Then, we consider two methods for
proving almost sure productivity.
\begin{enumerate}
\item We begin with a syntactic method that assigns a real-valued measure to
  each expression $e$ (Section~\ref{sec:hash}). Intuitively, the measure
  represents the expected difference between the number of outputs produced and
  consumed per evaluation step of the expression. For instance, the computation
  that repeatedly flips a fair coin and outputs a value if the coin is heads has
  measure $\frac{1}{2}$---with probability $1/2$ it produces an output, with
  probability $0$ it produces no outputs. More complex terms in our language can
  also consume outputs internally, leading to possibly negative values for the
  productivity measure.

  We show that every expression whose measure is strictly positive is almost
  surely productive; the proof of soundness of the method uses concentration
  results from martingale theory. While simple to carry out, our syntactic
  method is incomplete---it does not yield any information for expressions with
  non-positive measure.
\item To give a more sophisticated analysis, we reduce the problem of deciding
  ASP to probabilistic model{-}checking (Section~\ref{sec:ppda}).  We translate
  our programs to probabilistic pushdown automata and show that almost sure
  productivity is characterized by a logical formula in LTL. This fragment is
  known to be decidable~\cite{Brazdil2013}, giving a sound and complete procedure for
  deciding ASP.
\end{enumerate}
We consider more advanced generalizations and extensions in
Section~\ref{sec:discussion}, survey related work in Section~\ref{sec:rw}, and
conclude in Section~\ref{sec:conclusion}.

\section{Mathematical Preliminaries} \label{sec:prelim}
This section reviews basic notation and definitions from measure theory and
category theory. Given a set $A$ we will denote by $A_\bot$ the coproduct of
$A$ with a one-element set containing a distinguished element $\bot$, i.e.,
$A_\bot = A + \{\bot\}$. 
\paragraph*{Coalgebra, Monads, Kleisli categories}
We assume that the reader is familiar with the notions of objects,
morphisms, functors and natural transformations (see, for instance,
\cite{awodey2006cat}). Given an endofunctor $F: {\mathcal C} \to {\mathcal C}$, a \emph{coalgebra} of $F$ is a pair $(X, f)$
of an object $X \in {\mathcal C}$ and a morphism $f : X \to F(x)$ in ${\mathcal C}$ .
A \emph{monad} is a triple $(T, \eta, \mu)$ of an endofunctor $T : {\mathcal C} \to {\mathcal C}$ and two
natural transformations $\eta: 1_{\mathcal C} \to T$ (the \emph{unit}) and
$\mu: T^2 \to T$ (the \emph{multiplication}) such that $\mu \circ \mu_T = \mu
\circ T \mu$ and $\mu \circ \eta_T = 1_T  = \mu \circ T \eta$.
Given a category ${\mathcal C}$ and a monad $(T, \eta, \mu)$, the \emph{Kleisli
category} $\Kl(T)$ of $T$ has as objects the objects of ${\mathcal C}$ and as
morphisms $X \to Y$ the morphisms $X \to T(Y)$ in ${\mathcal C}$. 
\paragraph*{Streams, Trees, Final Coalgebra} We will denote by $O^\omega$ the
set of infinite streams of elements of $O$ (alternatively characterized as
functions $\NN\to O$). We have functions $\hd\colon O^\omega \to O$ and
$\tl\colon O^\omega \to O^\omega$ that enable observation of the elements of
the stream. In fact, they provide $O^\omega$ with a one-step structure that is
canonical: given any set $S$ and any two functions $h\colon S \to O$ and
$t\colon S\to S$ (i.e., a coalgebra $(S, \langle h,t \rangle)$ of the functor
$F(X) = O \times X$) there exists a {\em unique} stream function associating
semantics to elements of $S$:
\[
\xymatrix@C=1.5cm{
S  \ar@{-->}[r]^{\sem - }\ar[d]_{<h,t>}& O^\omega\ar[d]^{<\hd, \tl>}\\
O \times S \ar@{-->}[r]^{id \times \sem - } & O \times O^\omega
}
\]
Formally, this uniqueness property is known as \emph{finality}: $O^\omega$ is the \emph{final coalgebra} of the functor $F(X) = O\times X$ and the above diagram gives rise to a coinductive definition principle. A similar principle can be obtained for infinite binary trees and other algebraic datatypes. 
The above diagrams are in the category of sets and functions, but infinite streams and trees have a very rich algebraic structure and they are also the carrier of final coalgebras in other categories. For the purpose of this paper, we will be particularly interested in a category where the maps are probabilistic---the Kleisli category of the distribution (or {\em Giry}) monad. 
\paragraph*{Probability Distributions, $\sigma$-algebras, Measurable Spaces}
To model probabilistic behavior, we need some basic concepts from measure theory
(see, e.g.,~\cite{rudin-real-complex}). Given an arbitrary set $X$ we call a set
$\Sigma$ of subsets of $X$ a $\sigma$-\emph{algebra} if it contains the empty
set and is closed under complement and countable union.  A \emph{measurable
space} is a pair $(X,\Sigma)$. A \emph{probability measure} or distribution
$\mu$ over a measurable space is a function $\mu : \Sigma \to [0,1]$ assigning
probabilities $\mu(A) \in [0,1]$ to the \emph{measurable sets} $A \in \Sigma$
such that $\mu(X) = 1$ and $\mu(\bigcup_{i \in I} A_i) = \sum_{i \in I}
\mu(A_i)$ whenever $\{A_i\}_{i \in I}$ is a countable collection of disjoint
measurable sets. 
%
The collection $\Dist(X)$ of probability distributions over a measurable space
$X$ forms the so-called Giry monad. The monad unit $\eta\colon X \to \Dist(X)$
maps $a \in X$ to the point mass (or Dirac measure) $\delta_a$, i.e., the
measure assigning $1$ to any set containing $a$ and $0$ to any set not
containing $a$. The monad multiplication $m\colon \Dist \Dist (X) \to \Dist (X)$
is given by integration:
\[ 
m(P)(S) = \int  ev_S dP , \text{ where } ev_S (\mu) = \mu(S) .
\]
Given measurable spaces $(X,\Sigma_X)$ and $(Y,\Sigma_Y)$, a \emph{Markov
kernel} is a function $P : X \times \Sigma_Y \to [0,1]$ (equivalently, $X
\to\Sigma_Y \to [0,1]$) that maps each source state $x\in X$ to a distribution
over target states $P(x,-) : \Sigma_Y \to [0,1]$.

Markov kernels form the arrows in the Kleisli category $\Kl(\Dist)$ of the
$\Dist$ monad; we denote such arrows by $\xymatrix{X \klar{r}{P}& Y}$.
Composition in the Kleisli category is given by integration:
\[
\xymatrix{X \klar{r}{P}& Y\klar{r}{Q}& Z} \qquad\qquad (P\circ Q)(x,A) = \int_{y\in Y} P(x,dy) \cdot Q(y,A)
\]
Associativity of composition is essentially Fubini's theorem.

\section{Defining Almost Sure Productivity} \label{sec:def-asp}

We will consider programs that denote probability distributions over coinductive
types, such as infinite streams or trees. In this section, we focus on the
definitions for programs producing streams and binary trees for simplicity, but
our results should extend to arbitrary polynomial functors (see
Section~\ref{sec:discussion}).

First, we introduce the semantics of programs. Rather than fix a concrete
programming language at this point, we let $\TT$ denote an abstract state space
(e.g., the terms of a programming language or the space of program memories).
The state evolves over an infinite sequence of discrete time steps. At each
step, we will probabilistic observe either a concrete output ($a \in A$) or
nothing ($\bot$), along with a resulting state. Intuitively, $p\in \TT$ is ASP
if its probability of producing unboundedly many outputs is $1$.  Formally, we
give states in $\TT$ a denotational semantics $\sem - \colon \TT \to
\Dist((A_\bot)^\omega)$ defined coinductively, starting from a given one-step
semantics function that maps each term to an output in $A_\bot$ and the
resulting term. Since the step function is probabilistic, we work in the Kleisli
category for the distribution monad; this introduces some complications when
computing the final coalgebras in this category. We take the work on
probabilistic streams by Kerstan and
K\"onig~\cite{DBLP:journals/corr/KerstanK13} as our starting point, and then
generalize to probabilistic trees. 

\begin{theorem}[Finality for streams~\cite{DBLP:journals/corr/KerstanK13}]
  Given a set $\TT$ of programs endowed with a probabilistic step function
  $\step\colon \TT \to \Dist(A_\bot \times \TT)$, there is a {\em unique}
  semantics function $\sem -$ assigning to each program a probability
  distribution of output streams such that the following diagram commutes in the
  Kleisli category $\Kl(\Dist)$:
\[
\xymatrix@C=1.5cm{
\TT  \klar{r}{\sem - }\klarr{d}{\step}& (A_\bot)^\omega\klar{d}{<\hd, \tl>}\\
A_\bot \times \TT \klar{r}{id \times \sem - } & A_\bot \times (A_\bot)^\omega
}
\]
\end{theorem}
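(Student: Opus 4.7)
The plan is to verify that $(A_\bot)^\omega$, equipped with the standard cylinder $\sigma$-algebra and the Kleisli morphism $\eta \circ \langle \hd, \tl\rangle$, is the final coalgebra of the endofunctor $F = A_\bot \times (-)$ on $\Kl(\Dist)$; the statement then follows by the universal property of finality. This is the strategy of Kerstan and K\"onig, so the task is really to instantiate their construction and check that it does what we want for our step function $\step$.

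For \emph{existence}, I would build $\sem{-}$ as follows. Iterating $\step$ in the Kleisli category gives, for each $n$, a Markov kernel $\step^{(n)} : \TT \klar{r}{} A_\bot^n \times \TT$ via Kleisli composition, and projecting away the residual state produces a distribution $\mu_n(t,-)$ on $(A_\bot)^n$ for each $t \in \TT$. Routine use of Fubini (i.e.\ associativity of Kleisli composition) shows these families are consistent in the sense of Kolmogorov: $\mu_{n+1}(t, B \times A_\bot) = \mu_n(t,B)$ for every measurable $B \subseteq (A_\bot)^n$. Kolmogorov's extension theorem (applicable since $A_\bot$ is standard as a measurable space, or equivalently by a direct cylinder-set construction since each $A_\bot$ factor is discrete) then yields a unique probability measure $\sem{t}$ on $(A_\bot)^\omega$ whose projection onto the first $n$ coordinates is $\mu_n(t,-)$. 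Measurability of $t \mapsto \sem{t}(C)$ for cylinder sets $C$, and hence for all Borel sets in $(A_\bot)^\omega$ by a monotone class argument, is what makes $\sem{-}$ a Markov kernel.

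Commutativity of the diagram in $\Kl(\Dist)$ is then a direct computation on cylinder sets: both composites $\langle \hd, \tl\rangle \circ \sem{-}$ and $(\mathit{id} \times \sem{-}) \circ \step$ assign to $t$ and a cylinder $\{a\} \times B$ (with $B$ a cylinder of length $n$) the value $\int_{t'} \step(t,\{a\} \times dt') \cdot \mu_n(t', B)$, which by construction is $\mu_{n+1}(t, \{a\} \times B)$. Since cylinders generate the $\sigma$-algebra and form a $\pi$-system, Dynkin's $\pi$-$\lambda$ theorem extends the identity to all measurable sets.

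For \emph{uniqueness}, suppose $\sem{-}'$ is another Kleisli morphism making the diagram commute. Unfolding the diagram $n$ times forces the projection of $\sem{t}'$ onto $(A_\bot)^n$ to equal $\mu_n(t,-)$ for every $n$, so $\sem{t}$ and $\sem{t}'$ agree on all cylinder sets and therefore, by the uniqueness clause of Kolmogorov's theorem (equivalently, by the $\pi$-$\lambda$ theorem), on the whole $\sigma$-algebra. I expect the main technical obstacle to be the set-theoretic care needed for the measurability of $\sem{-}$ and the application of Kolmogorov extension in the Kleisli setting — essentially the same obstacle Kerstan and K\"onig overcome — while the coalgebraic picture itself is straightforward.
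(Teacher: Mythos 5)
The paper gives no proof of this theorem, deferring entirely to the cited work of Kerstan and K\"onig, and your reconstruction follows exactly the strategy of that cited proof: finite-dimensional marginals obtained by iterating $\step$ in the Kleisli category, an extension theorem on the semiring of cones (the same generating sets the paper later uses for its $\sigma$-algebra), and uniqueness via the $\pi$--$\lambda$ theorem. Your argument is correct as written; in particular the consistency condition $\mu_{n+1}(t, B \times A_\bot) = \mu_n(t,B)$ is precisely where totality of the probability (rather than sub-probability) monad enters, and discreteness of $A_\bot$ disposes of the measure-theoretic pathologies that Kerstan and K\"onig must otherwise handle in the general continuous setting.
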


\begin{definition}[ASP for streams] A stream program $p \in \TT$ is {\em almost surely productive} (ASP) if
\[
\underset{\sigma\sim\sem{p}}{\text{Pr}}\text{[$\sigma$ has infinitely many concrete output elements $a\in A$]} = 1. 
\]
\end{definition}

For this to be a sensible definition, the event ``$\sigma$ has infinitely many
concrete output elements $a\in A$'' must be a measurable set in some
$\sigma$-algebra on $(A_\bot)^\omega$. Following Kerstan and K\"onig, we take
the $\sigma$-algebra generated by \emph{cones}, sets of the form $uA^\omega = \{
v \in (A_\bot)^\omega \mid u\text{ prefix of } v, u\in (A_\bot)^*\}$. Our
definition evidently depends on the definition of $\sem - \colon \TT \to
\Dist(A_\bot)^\omega$; our coinductively defined semantics will be useful later
for showing soundness when verifying ASP, but our definition of ASP is sensible
for any semantics $\sem{-}$. 

\begin{example}
  Let us consider the following program defining a stream $\sigma$ recursively,
  in which each recursion step is determined by a coin flip with bias $p$:
  \[
    \sigma = (a:\sigma) \oplus_p \tl(\sigma)
  \]
  In the next section we will formally introduce this programming language, but
  intuitively the program repeatedly flips a coin. If the coin flip results in
  heads the program produces an element $a$. Otherwise the program tries to
  compute the tail of the recursive call; the first element produced by the
  recursive call is dropped (consumed), while subsequent elements produced (if
  any) are emitted as output.

  To analyze the productivity behavior of this probabilistic program, we can
  reason intuitively. Each time the second branch is chosen, the program must
  choose the first branch strictly more than once in order to produce one output
  (since, e.g., $\tl(a : \sigma) = \sigma$).  Accordingly, the productivity
  behavior of this program depends on the value of $p$. When $p$ is less than
  $1/2$, the program chooses the first branch less often than the second branch
  and the program is not ASP. On the other hand, when $p > 1/2$ the program will
  tend to produce more elements $a$ than are consumed by the destructors, and
  the above program is ASP. In the sequel, we will show two methods to formally
  prove this fact.
\end{example}

It will be convenient to represent the functor as $F(X) = A_\bot\times X$ as $A
\times X + X$. In the rest of this paper we will often use the latter
representation and refer to the final coalgebra as \emph{observation streams}
$\OS = (A_\bot)^\omega$ with structure $\xymatrix{\OS & &A \times \OS +\OS
\ar[ll]_-{<\out,\unf>}^-\cong}$ given by $\out(a,\sigma) = a:\sigma$ and
$\unf(\sigma) = \bot:\sigma$.

Streams are not the only coinductively defined data; infinite binary trees are
another classical example. To generate trees, we can imagine that a program
produces an output value---labeling the root node---and two child programs,
which then generate the left and right child of a tree of outputs. Much like we
saw for streams, probabilistic programs generating these trees may sometimes
step to a single new program without producing outputs. Accordingly we will work
with the functor $F(X) = A \times X \times X + X$, where the left summand can be
thought of as the result of an output step, while the right summand gives the
result of a non-output step.

\begin{theorem}[Finality for trees] Given a set of programs $\TT$ endowed with a probabilistic step function  $\step\colon \TT \to \Dist(A \times \TT\times \TT +\TT)$, there is a {\em unique} semantics function $\sem -$ assigning to each program a probability distribution of output trees such that the following diagram commutes in the Kleisli category $\Kl(\Dist)$. 
\[
\xymatrix@C=2.5cm{
\TT  \klar{r}{\sem - }\klarr{d}{\step}&   \Trees(A_\bot)\klar{d}{<\out, \unf>^{-1}}\\
A \times \TT\times \TT +\TT \klar{r}{id \times \sem -  \times \sem - +   \sem -  } & A \times \Trees(A_\bot) \times \Trees(A_\bot) +   \Trees(A_\bot)
}
\]
\end{theorem}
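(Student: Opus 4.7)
The plan is to adapt the stream-level argument of Kerstan and K\"onig to the tree functor $F(X) = A \times X \times X + X$. The three main ingredients are: (i)~endow $\Trees(A_\bot)$ with the $\sigma$-algebra generated by cones indexed by finite tree prefixes $u$, in direct analogy with the cones $u A^\omega$ used for streams; (ii)~lift $F$ to an endofunctor $\bar F$ on $\Kl(\Dist)$ through a distributive law $F \circ \Dist \Rightarrow \Dist \circ F$, which on the product branch sends $(a, \mu_1, \mu_2)$ to $\delta_a \otimes \mu_1 \otimes \mu_2$ and which commutes with the coproduct via the canonical map $\Dist(X) + \Dist(Y) \to \Dist(X + Y)$; and (iii)~verify that $\langle \out, \unf \rangle^{-1}$ realises $\Trees(A_\bot)$ as the final $\bar F$-coalgebra.

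For existence of $\sem{-}$, I would construct it as a limit of finite-depth approximants. For each $n \ge 0$ and each $t \in \TT$, define a distribution $\sem{t}_n$ on tree prefixes of depth~$n$ by unfolding $\step$ exactly $n$ times and marginalising out the residual terms at the leaves: on the output branch we sample $(a, t_1, t_2)$ and recurse independently on $t_1$ and $t_2$, yielding the product of their depth-$(n-1)$ approximants, while on the silent branch we prepend a $\bot$-node and recurse once more. The family $(\sem{t}_n)_{n \in \NN}$ is consistent under truncation by construction, so a Kolmogorov (or Carath\'eodory) extension argument produces a unique measure $\sem{t}$ on $\Trees(A_\bot)$ whose finite-depth marginals are exactly the $\sem{t}_n$. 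Measurability in $t$ follows by induction on $n$, giving a Kleisli arrow $\TT \to \Trees(A_\bot)$ which makes the square commute by design.

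Uniqueness is then a standard $\pi$-$\lambda$ argument: if $\llbracket - \rrbracket'$ is another solution of the coalgebra equation, then $\llbracket t \rrbracket'$ and $\sem{t}$ must agree on every cone $u\,\Trees(A_\bot)$ by induction on the depth of~$u$, using the commuting square to reduce each cone probability to a polynomial expression in the cones one level shallower. Since cones form a $\pi$-system generating the $\sigma$-algebra, a monotone-class argument extends the agreement to all measurable sets, so $\llbracket - \rrbracket' = \sem{-}$.

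The main obstacle I anticipate is the binary product $\TT \times \TT$, which is the genuinely new feature relative to the stream case. The lifting $\bar F$ must send a Kleisli arrow $f\colon X \to \Dist(Y)$ to an arrow that, on the product branch, pairs independent copies of $f$ via a product measure, and showing that $\bar F$ preserves Kleisli composition amounts to a Fubini calculation across the two factors. Checking that this product-lifting is compatible with both the unit and multiplication of $\Dist$, and with the coproduct decomposition, is the most delicate bookkeeping in the proof; once it is in place, the remainder---measurability of cones, consistency of approximants, and extension to $\Trees(A_\bot)$---follows the template of the stream theorem essentially verbatim.
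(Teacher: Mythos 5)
The paper itself states this theorem without proof, presenting it as a direct generalization of Kerstan and K\"onig's finality result for streams; your proposal reconstructs exactly that intended argument---the cone $\sigma$-algebra on $\Trees(A_\bot)$, the Kleisli lifting of the polynomial functor via the double strength of $\Dist$, finite-depth approximants plus an extension theorem for existence, and a $\pi$-$\lambda$ argument on cones for uniqueness. You also correctly isolate the only genuinely new ingredient relative to the stream case, namely the product-measure component of the distributive law, whose compatibility with Kleisli composition is precisely the commutativity (Fubini) property of the Giry monad, so the proposal is sound and follows the approach the paper relies on.
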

$\Trees(A_\bot)$ are infinite trees where the nodes are either elements of $A$ or $\bot$. An $a$-node has two children whereas a $\bot$-node only has one child. Formally, we can construct these trees with the two maps $\out$ and $\unf$:

\[
  \unf (\triangleS {$\sigma$} ) =
  \begin{tikzpicture}
    \node[output](top) at (0, -0.5) {\scriptsize $\bot$};
    \node[buffer](bottom) at (0, -2) {$\sigma$};
    \draw [-] (top) -- (bottom);
  \end{tikzpicture}
  \qquad \out (a, \triangleS {$\sigma$}, \triangleS {$\tau$} ) =
  \begin{tikzpicture}
    \node[output](top) at (0, -0.5) {\scriptsize $a$};
    \node[buffer](left) at (-1, -2) {$\sigma$};
    \node[buffer](right) at (1, -2) {$\tau$};
    \draw [-] (top) -- (left.north);
    \draw [-] (top) -- (right.north);
  \end{tikzpicture}
\]

Defining ASP for trees is a bit more subtle than for streams. Due to
measurability issues, we can only refer to the probability of infinitely many
outputs along one path at a time in the tree. A bit more formally, let $w \in \{
L, R \}^\omega$ be an infinite word on alphabet $\{ L, R \}$. Given any tree $t
\in \Trees(A_\bot)$, $w$ induces a single path $t_w$ in the tree: from the root,
the path follows the left/right child of $a$-nodes as indicated by $w$, and the
single child of $\bot$-nodes.

\begin{definition}[ASP for trees] A tree program $p \in \TT$ is {\em almost surely productive} (ASP) if
\[
  \forall w \in \{ L, R \}^\omega.\,
  \underset{t\sim\sem{p}}{\text{Pr}}[t_w \text{ has infinitely many concrete output nodes } a\in A] = 1. 
\]
\end{definition}
We have omitted the $\sigma$-algebra structure on $\Trees(A_\bot)$ for lack of
space, but it is quite similar to the one for streams: it is generated by the
cones $u\Trees(A_\bot) = \{ t \in \Trees(A_\bot) \mid t \text{ is an
extension of the finite tree $u$}\}$.

\begin{example}
  Consider the probabilistic tree defined by the following equation:
  \[
    \tau = \mkt(a, \tau, \tau) \oplus_p \lt(\tau)
  \]
  The $\mkt(a, t_1, t_2)$ constructor produces a tree with the root labeled by
  $a$ and children $t_1$ and $t_2$, while the $\lt(t)$ destructor consumes the
  output at the root of $t$ and steps to the left child of $t$. While this
  example is more difficult to work out informally, it has similar ASP behavior
  as the previous example we saw for streams: when $p > 1/2$ this program is
  ASP, since it has strictly higher probability of constructing a node (and
  producing an output) than destructing a node (and consuming an output).
\end{example}

\section{A Calculus for Probabilistic Streams and Trees} \label{sec:lang}

Now that we have introduced almost sure productivity, we consider how to verify
this property. We work with two variants of a simple calculus for probabilistic
coinductive programming, for producing streams and trees respectively.  We
suppose that outputs are drawn from some finite alphabet $A$. The language for
streams considers terms of the following form:
\[
    e \in \TT ::= \sigma
    \mid e \oplus_p e
    \mid \cons{a}{e}~(a \in A)
    \mid \tl(e)
\]
The distinguished variable $\sigma$ represents a recursive occurrence of the
stream so that streams can be defined via equations $\sigma = e$.
The operation $e_1 \oplus_p e_2$ selects $e_1$
with probability $p$ and $e_2$ with probability $1 - p$.
The constructor $\cons{a}{e}$ builds a stream with head $a$ and tail $e$. The
destructor $\tl(e)$ computes the tail of a stream, discarding the head.

The language for trees is similar, with terms of the following form:
\[
    e \in \TT ::= \tau
    \mid e \oplus_p e
    \mid \mkt(a,e,e)~(a \in A)
    \mid \lt(e)
    \mid \rt(e)
\]
The variable $\tau$ represents a recursive occurrence of the tree, so that trees
are defined as $\tau = e$. The constructor $\mkt(a,e_1,e_2)$ builds a tree with
root labeled $a$ and children $e_1$ and $e_2$. The destructors $\lt(e)$ and
$\rt(e)$ extract the left and right children of $e$, respectively.

We interpret these terms coalgebraically by first giving a step function from
$\step_e : \TT \to \Dist(F(\TT))$ for an appropriate functor, and then taking
the semantics as the map to the final coalgebra. For streams, we
take the functor $F(X) = A \times X + X$: a term steps to a distribution over
either an output in $A$ and a resulting term, or just a resulting term (with no
output). To describe how the recursive occurrence $\sigma$ steps, we parametrize
the step function $\step_e$ by the top level stream term $e$; this term remains
fixed throughout the evaluation. This choice restricts recursion to be global
in nature, i.e., our language does not support mutual or nested recursion.
Supporting more advanced recursion is also possible, but we stick with the
simpler setting here; we return to this point in Section~\ref{sec:discussion}.

The step relation is defined by case analysis on the syntax of terms.
Probabilistic choice terms reduce by scaling the result of stepping $e$ and
the result of stepping $e'$ by $p$ and $1-p$ respectively, and then combining
the distributions:
\begin{align*}
  \step_e(e_1 \oplus_p e_2) &\triangleq p \cdot \step_e(e_1) + (1 - p) \cdot \step_e(e_2)
\end{align*}
%
The next cases push destructors into terms:
\begin{align*}
  \step_e(\tl^k(\cons{a}{e})) &\triangleq \step_e(\tl^{k - 1}(e)) \\
  \step_e(\tl^k(e_1 \oplus_p e_2)) &\triangleq \step_e(\tl^k(e_1) \oplus_p \tl^k(e_2))
\end{align*}
%
Here and below, we write $\tl^k$ as a shorthand for $k > 0$ applications of
$\tl$.

The remaining cases return point distributions. If we have reached a constructor
then we produce a single output. Otherwise, we replace $\sigma$ by the top level
stream term, unfolding a recursive occurrence.
\begin{align*}
  \step_e(\cons{a}{e'}) &\triangleq \delta(\inl(a, e')) \\
  \step_e(e') &\triangleq \delta(\inr(e'[e / \sigma] ))
  \quad \text{otherwise}
\end{align*}
Note that a single evaluation step of a stream may lead to multiple constructors
at top level of the term, but only one output can be recorded each step---the
remaining constructors are preserved in the term and will give rise to outputs
in subsequent steps.

The semantics is similar for trees. We take the functor $F(X) = (A \times X
\times X) + X$: a term reduces to a distribution over either an output in $A$
and two child terms, or a resulting term and no output. The main changes to the
step relation are for constructors and destructors. The constructor $\mkt(a,
e_1, e_2)$ reduces to $\delta(\inl(a,e_1,e_2))$, representing an output $a$ this
step. Destructors are handle like $\tl$ for streams, where $\lt(\mkt(a,
e_1, e_2))$ reduces to $e_1$ and $\rt(\mkt(a, e_1, e_2))$ reduces to $e_2$, and
$\tl^k(-)$ is generalized to any finite combination of $\lt(-)$ and $\rt(-)$.

Concretely, let $C[e]$ be any (possibly empty) combination of $\lt$ and $\rt$
applied to $e$. We have the following step rules:
\begin{align*}
  \step_e(C[\lt(\mkt(a, e_l, e_r))]) &\triangleq \step_e(C[e_l]) \\
  \step_e(C[\rt(\mkt(a, e_l, e_r))]) &\triangleq \step_e(C[e_r]) \\
  \step_e(C[e_1 \oplus_p e_2]) &\triangleq p \cdot \step_e(C[e_1]) + (1 - p) \cdot \step_e(C[e_2]) \\
  \step_e(\mkt(a, e_l, e_r)) &\triangleq \delta(\inl(a, e_l, e_r)) \\
  \step_e(C[\tau]) &\triangleq \delta(\mathit{\inr}(C[e]))
\end{align*}
%
%
%
%

\section{Syntactic Conditions for ASP} \label{sec:hash}

With the language and semantics in hand, we now turn to proving ASP.  While it
is theoretically possible to reason directly on the semantics using our
definitions from Section~\ref{sec:def-asp}, in practice it is much easier to
reason about the language. In this section we present a syntactic sufficient
condition for ASP. Intuitively, the idea is to approximate the expected number
of outputs every step; if this measure is strictly positive, then the program is
ASP.

\subsection{A Syntactic Measure}

We define a syntactic measure $\#(-) : \TT \to \RR$ by induction on stream terms:
\begin{align*}
  \#(\sigma) &\triangleq 0 \\
  \#(e_1 \oplus_p e_2) &\triangleq p \cdot \#(e_1) + (1 - p) \cdot \#(e_2) \\
  \#(\cons{a}{e}) &\triangleq \#(e) + 1 \\
  \#(\tl(e)) &\triangleq \#(e) - 1
\end{align*}
The measure $\#$ describes the expected difference between the number of outputs
produced (by constructors) and the number of outputs consumed (by destructors)
in each unfolding of the term.
We can define a similar measure for tree terms:
\begin{align*}
  \#(\tau) &\triangleq 0 \\
  \#(e_1 \oplus_p e_2) &\triangleq p \cdot \#(e_1) + (1 - p) \cdot \#(e_2) \\
  \#(\mkt(a, e_1, e_2)) &\triangleq \min(\#(e_1), \#(e_2)) + 1 \\
  \#(\lt(e)) = \#(\rt(e)) &\triangleq \#(e) - 1
\end{align*}
We can now state conditions for ASP for streams and trees.

\begin{theorem} \label{thm:hash:stream}
  Let $e$ be a stream term with $\gamma = \#(e)$. If $\gamma > 0$, $e$ is ASP.
  
\end{theorem}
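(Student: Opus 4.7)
The plan is to exhibit a martingale controlling the evaluation whose drift is $\gamma > 0$, and then deduce that outputs accumulate without bound via a martingale concentration argument.

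First I would reify the evaluation of $e$ as a discrete-time Markov chain $(T_n)_{n\ge 0}$ on terms, with $T_0 = \sigma$ and $T_{n+1}$ drawn from the term component of $\step_e(T_n)$; let $O_n \in \{0,1\}$ record whether step $n$ produced an output (the recursive unfolding of $\step_e$ bottomed out at the $\cons{a}{e'}$ rule versus the otherwise clause) and set $U_n = \sum_{i \le n} O_i$. A case analysis on the defining equations of $\step_e$ establishes that the non-terminal cases preserve $\#$ (the $\oplus_p$ case in expectation over the branch, and the two destructor-pushing cases by direct algebra), an output step decreases $\#$ of the state by exactly $1$, and an unfolding step $\tl^k(\sigma) \mapsto \tl^k(e)$ increases $\#$ by exactly $\gamma$. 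Combining these gives
\[
  \EE\bigl[\#(T_{n+1}) - \#(T_n) + (1+\gamma)\, O_{n+1} \bigm| T_n\bigr] \;=\; \gamma ,
\]
so that $M_n := \#(T_n) + (1+\gamma)\, U_n - \gamma n$ is a martingale.

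Second, I would apply a martingale concentration result---Azuma--Hoeffding if the increments can be bounded by a constant depending only on $|e|$, or an $L^2$-SLLN otherwise---together with Borel--Cantelli to obtain $M_n = o(n)$ almost surely. Rearranging yields $(1+\gamma)\,U_n = \gamma n + M_n - \#(T_n)$, so it suffices to show $\#(T_n) = o(n)$ almost surely. To this end I would couple $(T_n)$ to the one-dimensional random walk on $\NN$ tracking the ``destructor debt'' $k$ visible in the canonical form $\tl^k(\cdot)$ of the reachable states in the running example; the expected drift of this walk is $-\gamma < 0$, so it is positive-recurrent, forcing $\#(T_n) = o(n)$. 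Combining everything gives $U_n/n \to \gamma/(1+\gamma) > 0$ almost surely, which is exactly ASP.

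The main obstacle I expect is twofold: first, bounding martingale increments, since a single macro-step can cascade through many nested $\oplus_p$ branches and cause $\#$ to jump by an amount governed by the accumulated destructor depth in $T_n$; second, identifying the correct surrogate random walk for a general $e$, whose reachable states can exhibit more structure than the clean $\tl^k(\cdot)$ form of the motivating example. The martingale identity itself falls out of the definition of $\#$ almost mechanically; the real technical work lies in controlling the syntactic shape of $T_n$ over time so that both concentration arguments go through.
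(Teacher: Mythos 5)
Your overall strategy is the same as the paper's: the same drift computation (output steps cost $1$, unfold steps gain $\gamma$, destructor-pushing and $\oplus_p$ steps preserve $\#$ in expectation), the same martingale up to reparametrization (the paper's $X_i = \sum_j O_j - \gamma\sum_j(1-O_j) + \#(T_{i+1})$ is exactly your $\#(T_n)+(1+\gamma)U_n-\gamma n$), and the same appeal to Azuma--Hoeffding. The two worries you flag at the end are real but both have much simpler resolutions than you propose, and one of your proposed fixes aims at the wrong bound.

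On bounding $\#(T_n)$: from $(1+\gamma)U_n = \gamma n + M_n - \#(T_n)$ you only need an \emph{upper} bound on $\#(T_n)$ of size $o(n)$; a lower bound is irrelevant for ASP, since a very negative $\#(T_n)$ only increases $U_n$. The destructor debt $k$ in $\tl^k(\cdot)$ enters $\#$ with a \emph{minus} sign, so your coupling to a positive-recurrent walk on the debt controls the harmless direction. What must be bounded above is the surplus of pending constructors, and this is a deterministic, syntactic fact built into the step relation: an output step emits exactly one constructor, and the step function only unfolds when the head of the term is $\sigma$ under destructors, i.e.\ from a term of non-positive measure, after which the measure is at most $\gamma$ minus the residual debt. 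Hence $\#(T_n) \le c'$ where $c'$ is one more than the number of constructors of $e$ --- no recurrence or coupling argument, and no need to identify a ``surrogate random walk'' for general $e$. (Your stronger conclusion $U_n/n \to \gamma/(1+\gamma)$ would require two-sided control of $\#(T_n)$ and is not needed.) On increment bounds: every reachable $T_n$ is a subterm of $e$ (after substituting $\sigma \mapsto e$) wrapped in some $\tl^k$, and one macro-step changes the subterm and changes $k$ by at most the number of constructors/destructors occurring in $e$; so $|M_{n+1}-M_n|$ is bounded by a constant depending only on $e$, the cascading through nested $\oplus_p$ branches notwithstanding, and plain Azuma--Hoeffding with $B = n^{2/3}$ plus Borel--Cantelli finishes as you intend. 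With these two repairs your proof is the paper's proof.
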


%


\begin{theorem} \label{thm:hash:tree}
  Let $e$ be a tree term with $\gamma = \#(e)$.  If $\gamma > 0$, $e$ is ASP.
\end{theorem}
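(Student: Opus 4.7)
The plan is to reduce Theorem~\ref{thm:hash:tree} to the stream case (Theorem~\ref{thm:hash:stream}) by analysing each path of the tree separately. Fix an arbitrary $w \in \{L,R\}^\omega$; by the definition of ASP for trees it suffices to prove
\[
\underset{t \sim \sem{e}}{\text{Pr}}[\,t_w \text{ has infinitely many output nodes in } A\,] = 1,
\]
and then to note that $w$ was arbitrary.

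First I would introduce a path-projected measure $\#_w$ that agrees with $\#$ on every clause except constructors, where the $\min$ is replaced by the child actually prescribed by $w$: set $\#_w(\mkt(a, e_l, e_r)) \triangleq \#_{w_{>1}}(e_{w_1}) + 1$, where $w_1 \in \{L,R\}$ is the first symbol of $w$ and $w_{>1}$ its tail. Since $\min(\#(e_l), \#(e_r)) \le \#(e_i)$ for $i \in \{l,r\}$, a straightforward induction on terms yields $\#_w(e') \ge \#(e')$ for every subterm $e'$ and every path $w$; in particular $\#_w(e) \ge \gamma > 0$. Intuitively, $\#_w$ measures the expected on-path production minus the on-path consumption per unfolding, and it is exactly the quantity that the stream soundness proof would manipulate.

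Following the martingale strategy used for Theorem~\ref{thm:hash:stream}, I would then consider the Markov chain $\{T_n\}_{n \ge 0}$ on terms induced by $\step_e$ from $T_0 = e$, and define $X_n$ as the number of on-path outputs produced by time $n$ minus the number of on-path outputs consumed by time $n$. A single step changes $X_n$ by a bounded amount, and a case analysis on $\step_e$ (pushing destructors through $\mkt$ and $\oplus_p$, unfolding a recursive occurrence) shows that the potential $\Phi_n \triangleq X_n + \#_w(T_n)$ satisfies $\EE[\Phi_{n+1} - \Phi_n \mid T_n] \ge \gamma$ each time recursion is unfolded, and equals $0$ on the intermediate structural rewrites. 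Applying Azuma--Hoeffding together with Borel--Cantelli, exactly as in the stream case, gives $X_n / n \to \gamma$ and hence $X_n \to \infty$ almost surely, so $t_w$ contains infinitely many concrete outputs with probability $1$.

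The main obstacle I expect is the bookkeeping for the compound destructor contexts $C[\cdot]$ built from $\lt$ and $\rt$: a single step can discard an entire off-path subtree, and it must be checked that the projected measure $\#_w$ behaves additively under precisely the rearrangements performed by $\step_e$ (most delicately under $\step_e(C[e_1 \oplus_p e_2]) = p\cdot\step_e(C[e_1]) + (1-p)\cdot\step_e(C[e_2])$, where the context must be tracked on both sides). Once this invariant for $\#_w$ is in place the submartingale argument is parallel to the stream case, and since $w \in \{L,R\}^\omega$ was arbitrary we conclude that $e$ is ASP.
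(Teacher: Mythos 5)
Your overall skeleton is the right one and matches the paper's: fix an arbitrary $w \in \{L,R\}^\omega$, define the processes along the induced path in the observation tree, exhibit a potential whose drift is at least $\gamma$ per unfolding, and finish with Azuma--Hoeffding exactly as in the stream case. However, the specific device you introduce --- the path-projected measure $\#_w$ with $\#_w(\mkt(a,e_l,e_r)) \triangleq \#_{w_{>1}}(e_{w_1})+1$ --- breaks at precisely the point you flag as ``the main obstacle,'' and I do not believe it can be repaired as stated. The problem is the destructor--constructor interaction: in the rewrite $\step_e(C[\lt(\mkt(a,e_l,e_r))]) = \step_e(C[e_l])$ the child is selected by the \emph{destructor}, not by $w$, and the output $a$ is consumed, so the path pointer does not advance. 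Compositionally your measure gives $\#_w(\lt(\mkt(a,e_l,e_r))) = \#_{w_{>1}}(e_{w_1})$, whereas the term actually rewrites to $e_l$ whose relevant on-path measure is $\#_w(e_l)$ (with $w$ unshifted). When $w_1 = R$ these are unrelated, and $\#_w(e_l)$ can be arbitrarily smaller than $\#_{w_{>1}}(e_r)$ (take $e_l = \lt^k(\tau)$ and $e_r$ a deep nest of constructors). So the claimed invariant $\EE[\Phi_{n+1}-\Phi_n \mid T_n] = 0$ on structural rewrites fails, and in the wrong direction for a sub-martingale; the bounded-differences hypothesis of Azuma--Hoeffding is also endangered.

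The paper's proof avoids this by \emph{keeping} the $\min$ in the definition of $\#$ and weakening the one-step expectation to an inequality: for every tree term $t$,
\[
  \EE_{\step_e'(t)} \bigl[ (\inl(-,t_1,t_2),-) \to 1 + \min(\#(t_1), \#(t_2)) \text{ else } (\inr(t'),-) \to \#(t') - \gamma \bigr] \geq \#(t),
\]
so that $X_i = \sum_j O_j - \gamma \sum_j U_j + \#(T_{i+1})$ is only a \emph{sub-}martingale. The point is that $\min(\#(e_l),\#(e_r)) \leq \#(e_l)$ and $\min(\#(e_l),\#(e_r)) \leq \#(e_r)$, so whichever child is selected --- by $w$ at an on-path output, or by a destructor that consumes the output --- the measure can only increase, which is exactly the direction a sub-martingale tolerates. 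Azuma--Hoeffding for sub-martingales then gives the one-sided tail bound that is actually needed. Your observation that $\#_w(e') \geq \#(e')$ is true but is not the inequality that matters; the missing idea is to let the slack live in the step lemma rather than in a redefined measure. If you replace $\#_w$ by the paper's $\#$ and prove the displayed inequality by induction on terms (the only nontrivial cases being $C[\lt(\mkt(\cdot))]$ and $C[\rt(\mkt(\cdot))]$), the rest of your argument goes through essentially verbatim.
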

%

\iffull
\subsection{Soundness}
\fi

The main idea behind the proof for streams is that by construction of the step
relation, each step either produces an output or unfolds a fixed point (if there
is no output). In unfolding steps, the expected measure of the term plus the
number of outputs increases by $\gamma$. By defining an appropriate martingale
and applying the Azuma-Hoeffding inequality, the sum of the measure and the
number of outputs must increase linearly as the term steps when $\gamma > 0$.
Since the measure is bounded above---when the measure is large the stream
outputs instead of unfolding---the number of outputs must increase linearly and
the stream is ASP.

\iffull
We will need a few standard constructions and results from probability theory.

\begin{definition}[See, e.g.,~\cite{durrett2010probability}]
  A \emph{filtration} $\{ \cF_i \}_{i \in \NN}$ of a $\sigma$-algebra $\cF$ on a
  measurable space $A$ is an sequence of $\sigma$-algebras such that $\cF_i
  \subseteq \cF_{i + 1}$ and $\cF_i \subseteq \cF$, for all $i \in \NN$. A
  \emph{stochastic process} is a sequence of random variables $\{ X_i : A \to B
  \}_{i \in \NN}$ for $B$ some measurable space, and the process is
  \emph{adapted to the filtration} if every $X_i$ is $\cF_i$-measurable.
\end{definition}

Intuitively, a filtration gives each event a time $i$ at which the event starts
to have a well-defined probability. A stochastic process is adapted to the
filtration if its value at time $i$ only depends on events that are well-defined
at time $i$ or before (and not events at future times).

An important class of stochastic processes are martingales.

\begin{definition}[See, e.g.,~\cite{durrett2010probability}]
  Let $\{ X_i : A \to \RR \}$ be a real-valued stochastic process adapted to
  some filtration on $A$, and let $\mu$ be a measure on $A$. Suppose that
  $\EE_\mu [ |X_i| ] < \infty$ for all $X_i$. The sequence is a \emph{martingale}
  if for all $i \in \NN$, we have
  \[
    \EE_{\mu} [ X_{i + 1} \mid \cF_i ] = X_i .
  \]
  The conditional expectation turns $X_{i + 1}$ from an $\cF_{i + 1}$-measurable
  map to an $\cF_i$-measurable map; equivalently, the martingale condition can
  be stated as
  \[
    \EE_{\mu} [ (X_{i + 1} - X_i) \chi_{F} ] = 0 ,
  \]
  for every event $F \in \cF_i$, where $\chi_{F}$ is the indicator function.
  If the equalities are replaced by $\geq$ (resp., $\leq$), then the sequence is
  a sub- (resp., super-) martingale. 
\end{definition}

Martingale processes determine a sequence of random variables that may not be
independent, but where the expected value of the process at some time step
depends only on its value at the previous time step. Martingales satisfy
concentration inequalities.

\begin{theorem}[Azuma-Hoeffding inequality~\cite{azuma1967weighted}] \label{thm:azuma}
  Let $\{ X_i \}_i$ be sequence such that $|X_{i + 1} - X_i| \leq
  c$ for all $i \in \NN$. If $\{ X_i \}_i$ is a sub-martingale, then for every
  $n \in \NN$ and $B \geq 0$, we have
  \[
    \Pr [ X_n - X_0 \leq -B ] \leq \exp( - B^2 / 2 n c ) .
  \]
  If $\{ X_i \}_i$ is a super-martingale, then for every
  $n \in \NN$ and $B \geq 0$, we have
  \[
    \Pr [ X_n - X_0 \geq B ] \leq \exp( - B^2 / 2 n c ) .
  \]
  If $\{ X_i \}_i$ is both a martingale, then combining the above results gives
  \[
    \Pr [ |X_n - X_0| \geq B ] \leq 2 \exp( - B^2 / 2 n c ) .
  \]
\end{theorem}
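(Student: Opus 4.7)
The plan is to follow the standard exponential-moment (Chernoff) proof of the Azuma--Hoeffding inequality. I would prove the super-martingale tail bound first; the sub-martingale case then follows by applying that bound to $\{-X_i\}_i$ (which is a super-martingale with the same increment bound $c$), and the two-sided martingale bound is obtained via a union bound over the two one-sided estimates, since a martingale is both a sub- and a super-martingale.

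\emph{Step 1 (Chernoff-style bound).} For any $\lambda > 0$, monotonicity of $\exp(\lambda \cdot)$ and Markov's inequality give
\[
\Pr[X_n - X_0 \geq B] \;\leq\; e^{-\lambda B}\, \mathbb{E}\bigl[e^{\lambda(X_n - X_0)}\bigr].
\]
Writing the increments $D_i = X_i - X_{i-1}$, we have $|D_i| \leq c$ by hypothesis and, in the super-martingale case, $\mathbb{E}[D_i \mid \mathcal{F}_{i-1}] \leq 0$. The aim is to control the moment generating function by iterating a conditional one-step bound on $\mathbb{E}[e^{\lambda D_i} \mid \mathcal{F}_{i-1}]$.

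\emph{Step 2 (conditional Hoeffding lemma).} The key auxiliary result I would prove is: for any random variable $D$ with $|D| \leq c$ and $\mathbb{E}[D \mid \mathcal{G}] \leq 0$,
\[
\mathbb{E}[e^{\lambda D} \mid \mathcal{G}] \;\leq\; e^{\lambda^2 c^2 / 2}.
\]
The standard route is via convexity of $x \mapsto e^{\lambda x}$ on $[-c,c]$: write $D = \tfrac{c+D}{2c}(c) + \tfrac{c-D}{2c}(-c)$, invoke the convexity inequality $e^{\lambda D} \leq \tfrac{c+D}{2c}e^{\lambda c} + \tfrac{c-D}{2c}e^{-\lambda c}$, take conditional expectation, use $\mathbb{E}[D \mid \mathcal{G}] \leq 0$, and finally bound the resulting hyperbolic expression by $e^{\lambda^2 c^2/2}$ through a term-by-term comparison of Taylor series.

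\emph{Step 3 (telescoping and optimisation).} Since $e^{\lambda(X_{n-1} - X_0)}$ is $\mathcal{F}_{n-1}$-measurable, the tower property yields
\[
\mathbb{E}\bigl[e^{\lambda(X_n - X_0)}\bigr] \;=\; \mathbb{E}\bigl[e^{\lambda(X_{n-1} - X_0)}\, \mathbb{E}[e^{\lambda D_n} \mid \mathcal{F}_{n-1}]\bigr] \;\leq\; e^{\lambda^2 c^2/2}\, \mathbb{E}\bigl[e^{\lambda(X_{n-1} - X_0)}\bigr].
\]
Iterating $n$ times gives $\mathbb{E}[e^{\lambda(X_n - X_0)}] \leq e^{n\lambda^2 c^2/2}$, so combined with Step 1, $\Pr[X_n - X_0 \geq B] \leq \exp(-\lambda B + n\lambda^2 c^2/2)$. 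Optimising over $\lambda > 0$ by choosing $\lambda^\star = B/(n c^2)$ yields the stated exponential tail bound.

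The main obstacle is the conditional Hoeffding lemma in Step 2; once it is established, the remainder is a mechanical Chernoff-telescoping-optimisation computation. The subtlety in the conditional form is verifying that the unconditional convexity bound lifts cleanly under conditional expectation, which works because $c$ is a deterministic constant (so the endpoints $\pm c$ do not depend on $\mathcal{G}$) and the bounding function depends on $D$ only affinely before the expectation is applied.
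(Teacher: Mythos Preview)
The paper does not prove this theorem; it is quoted as a classical result with a citation to Azuma~\cite{azuma1967weighted} and then applied as a black box in the proofs of Theorems~\ref{thm:hash:stream} and~\ref{thm:hash:tree}. Your proposal is the standard Chernoff--Hoeffding argument and is correct in outline and in all essential details.

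One point worth flagging: carrying out your Step~3 optimisation with $\lambda^\star = B/(nc^2)$ yields the bound $\exp\bigl(-B^2/(2nc^2)\bigr)$, which is the usual form of the Azuma--Hoeffding inequality. The statement in the paper has $\exp(-B^2/2nc)$, with $c$ rather than $c^2$ in the denominator; this appears to be a typographical slip in the paper rather than an error in your argument. Your proof establishes the correct classical bound, and since the paper only ever uses the inequality qualitatively (to show a probability tends to~$1$ as $n \to \infty$), the discrepancy is immaterial to the downstream results.
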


\begin{proof}[Proof of Theorem~\ref{thm:hash:stream}]
  While the semantics constructed in Section~\ref{sec:def-asp} is
  sufficient to describe ASP, for showing soundness it is more convenient to
  work with an instrumented semantics that tracks the term in the observation stream.
  We can give a step function of type $\step_e' : \TT \to \Dist(F'(\TT))$, where $F'(X) =
  (A \times X + X) \times \TT$ by recording the input term in the output. For
  instance:
  \begin{align*}
    \step_e'(\cons{a}{e'}) &\triangleq \delta(\inl(a, e'), \cons{a}{e'}) \\
    \step_e'(\sigma) &\triangleq \delta(\inr(e), \sigma)
  \end{align*}
  and so forth. Using essentially the same construction as in
  Section~\ref{sec:def-asp}, we get an instrumented semantics $\sem{-}' : \TT
  \to \Dist(\OS')$, where $\OS'$ are infinite streams with constructors $\out' : (A
  \times \OS') \times \TT \to \OS'$ and $\unf' : \OS' \times \TT \to \OS'$,
  representing output and unfold steps respectively. Letting the map $u : \OS'
  \to \OS$ simply drop the instrumented terms, the map $\Dist(u) \circ \sem{-}' : \TT
  \to \Dist(\OS)$ coincides with the semantics $\sem{-}$ defined in
  Section~\ref{sec:def-asp} by finality.

  Now, we define a few stochastic processes. Let $\{ T_i : \OS' \to \TT \}_i$ be
  the sequence of instrumented terms with $T_0 = e$, $\{ O_i : \OS' \to \{ 0, 1
  \} \}_i$ be $1$ if the $i$th node is an output node and $0$ if not, $\{ U_i :
\OS' \to \{ 0, 1 \} \}_i = \{ 1 - O_i \}_i$. It is straightforward to show that
$T_i$ is $\cF_{i - 1}$-measurable (and hence $\cF_i$-measurable), and $O_i, U_i$
are $\cF_i$-measurable---all three processes are defined by the events in the
first $i$ steps.

  Now for any stream term $t \in \TT$, we claim that
  \[
    \EE_{\step_e'(t)} [ (\inl(-, t'), -) \to 1 + \#(t') \text{ else } (\inr(t'), -) \to \#(t') - \gamma ] = \#(t) .
  \]
  This follows by induction on terms using the definition of $\step_e'$. We can
  lift the equality to the semantics, giving
  \[
    \EE_{\sem{T_0}'} [ O_{i + 1} - \gamma U_{i + 1} + \#(T_{i + 2}) \mid \cF_i ] = \#(T_{i + 1})
  \]
  noting that $T_{i + 1}$ is $\cF_i$-measurable and recalling that $T_0 = e$ is
  the initial term. We now define another stochastic process via
  \[
    X_i \triangleq \sum_{j = 0}^{i} O_j - \gamma \sum_{j = 0}^{i} U_j + \#(T_{i + 1}) .
  \]
  Note that $X_{i}$ is $\cF_i$-measurable. As we will show, this process tends
  towards zero, the second term decreases, and the third term remains bounded.
  Hence, the first term---the cumulative number of outputs---must tend towards
  infinity. Evidently each $X_i$ is bounded and we are working with probability
  measures, so each $X_i$ is integrable. We can directly check that $\{ X_i
  \}_i$ is a martingale:
  \begin{align*}
    \EE_{\sem{T_0}'}
    [ X_{i + 1} \mid \cF_i ] 
    &= \EE_{\sem{T_0}'}
    \left[ \sum_{j = 0}^{i} O_j + O_{i + 1}
      - \gamma \sum_{j = 0}^{i} U_j - \gamma U_{i + 1} + \#(T_{i + 2}) \mid \cF_i \right] \\
    &= \EE_{\sem{T_0}'}
      \left[ \sum_{j = 0}^{i} O_j - \gamma \sum_{j = 0}^{i} U_j + \#(T_{i + 1}) \mid \cF_i \right] \\
    &= \sum_{j = 0}^{i} O_j - \gamma \sum_{j = 0}^{i} U_j + \#(T_{i + 1}) = X_{i} .
  \end{align*}
  We now claim that $\#(T_i) \leq c'$ where $c'$ is one more than the number of
  constructors in the original term $T_0$. This follows by observing that (i)
  the step function increases the measure by at most the number of constructors
  or $1$ every step, and (ii) the step function only unfolds if a term reduces
  to a term with non-positive measure. Similarly,
  \[
    \sum_{j = 0}^{i} U_j \geq \lfloor i / c' \rfloor
  \]
  since each unfolding step leads to at most $c'$ output (non-unfolding) steps.

  Since $O_i$ and $U_i$ are both in $\{ 0, 1 \}$, this implies that $|X_{i + 1}
  - X_i|$ is bounded by some constant $c = c' + 2$, depending only on the
  initial term. We can now apply the Azuma-Hoeffding inequality
  (Theorem~\ref{thm:azuma}). For every $n \in \NN$ and $B \geq 0$, we have
  \[
    \Pr_{\sem{T_0}'} [ X_n - X_0 \geq -B ] \geq 1 - \exp(-B^2 / 2 n c) .
  \]
  Taking $B = n^{2/3}$, we have
  \[
    \Pr_{\sem{T_0}'} [ X_n \geq X_0 - n^{2/3} ] \geq 1 - \exp(- n^{1/3} / 2 c) .
  \]
  We also know that the total number of outputs is at least
  \[
    \sum_{j = 0}^{n} O_j = X_n + \gamma \sum_{j = 0}^{n} U_j  - \#(T_{n + 1})
    \geq X_n + \gamma \lfloor n / c' \rfloor .
  \]
  So if $\gamma > 0$, the stream has zero probability of producing at most $M$
  outputs for any finite $M$. This is because for $X_n$ is at least $-n^{2/3}$
  with probability arbitrarily close to $1$ (for large enough $n$), and $\gamma
  \lfloor n / c' \rfloor$ grows linearly in $n$ for $\gamma$ positive.
  Hence, the term is ASP.
\end{proof}
\fi

The proof for trees is similar, showing that on any path through the observation tree
there are infinitely many output steps with probability $1$.
\ifshort
We present detailed proofs in the full version of this paper.
\else
\begin{proof}[Proof of Theorem~\ref{thm:hash:tree}]
  We again work with an instrumented semantics based on the step
  function $\step_e' : \TT \to \Dist(F'(\TT))$, where $F'(X) = (A \times X \times X + X) \times
  \TT$ by recording the input term in the output. For instance:
  \begin{align*}
    \step_e'(\mkt(a, e_1, e_2)) &\triangleq \delta(\inl(a, e_1, e_2), \mkt(a, e_1, e_2)) \\
    \step_e'(\tau) &\triangleq \delta(\inr(e), \tau)
  \end{align*}
  and so forth. Using essentially the same construction as in
  Section~\ref{sec:def-asp}, we get an instrumented semantics $\sem{-}' : \TT
  \to \Dist(\OT')$, where $\OT'$ are infinite trees with constructors $\out' : (A
  \times \OT' \times \OT') \times \TT \to \OT'$ and $\unf' : \OT' \times \TT \to
  \OT'$, representing output and unfold steps respectively. Letting the map $u :
  \OT' \to \OT$ simply drop the instrumented terms, the map $\Dist(u) \circ \sem{-}' :
  \TT \to \Dist(\OT)$ coincides with the semantics $\sem{-}$ defined in
  Section~\ref{sec:def-asp} by finality.

  Let $w \in \{ L, R \}^\omega$ be any infinite word, describing whether to
  follow the left or right child of a tree. Each word determines a path through
  an observation tree: on unfold nodes we simply follow the child, while on output
  nodes we follow the child indicated by $w$. We aim to show that if $\gamma >
  0$, then there are infinitely many output nodes along this path with
  probability $1$. If this holds for all $w$, then the tree term must be
  ASP.

  To model the path, we define a sequence $\{ P_i : \OT' \to A \times \TT
  \times \TT + \TT \}_i$ inductively.  $P_0$ is simply the root of the output
  tree $\OT'$. Given $P_0, \dots, P_i$, we define $P_{i + 1}$ to be a child of
  $P_i$ as follows. If $P_i$ is an unfold node it only has one child, so we take
  $P_{i + 1}$ to be this child. Otherwise we take $P_{i + 1}$ to be the child of
  $P_i$ indicated by $w_{j + 1}$, where $j$ is the number of output nodes in
  $P_0, \dots, P_i$. The process $\{ P_i \}_i$ is adapted to the filtration on
  $\OT'$.  (Note that all indices start at $0$.)
  
  Now, we can define similar processes as in the stream case with respect to the
  path.  Let $\{ T_i : \OS' \to \TT \}_i$ be the sequence of instrumented terms
  along the path with $T_0 = e$, $\{ O_i : \OS' \to \{ 0, 1 \} \}_i$ be
  $1$ if $P_i$ is an output node and $0$ if not, $\{ U_i : \OS' \to \{ 0, 1 \}
  \}_i = \{ 1 - O_i \}_i$. It is straightforward to show that $T_i$ is $\cF_{i - 1}$-measurable
  (and hence $\cF_i$-measurable), and $O_i, U_i$ are $\cF_i$-measurable---all
  three processes are defined by the events in the first $i$ steps.

  Now for any tree term $t \in \TT$, we have
  \[
    \EE_{\step_e'(t)} [ (\inl(-, t_1, t_2), -) \to 1 + \min(\#(t_1), \#(t_2)) \text{ else } (\inr(t'), -) \to \#(t') - \gamma ] \geq \#(t)
  \]
  by induction on terms using the definition of $\step_e'$. The inequality arises
  from applying a destructor to a constructor---we may end up with a child term
  that has larger measure than the parent term, since the measure of a
  constructor takes the smaller measure of its children. We can lift the
  inequality to the semantics, giving
  \[
    \EE_{\sem{T_0}'} [ O_{i + 1} - \gamma U_{i + 1} + \#(T_{i + 2}) \mid \cF_i ] \geq \#(T_{i + 1})
  \]
  noting that $T_{i + 1}$ is $\cF_i$-measurable and letting $T_0 = e$ be the
  initial term. We can now our invariant process
  \[
    X_i \triangleq \sum_{j = 0}^{i} O_j - \gamma \sum_{j = 0}^{i} U_j + \#(T_{i + 1})
  \]
  which is a sub-martingale:
  \begin{align*}
    \EE_{\sem{T_0}'}
    [ X_{i + 1} \mid \cF_i ] 
    &= \EE_{\sem{T_0}'}
    \left[ \sum_{j = 0}^{i} O_j + O_{i + 1}
      - \gamma \sum_{j = 0}^{i} U_j - \gamma U_{i + 1} + \#(T_{i + 2}) \mid \cF_i \right] \\
    &\geq \EE_{\sem{T_0}'}
      \left[ \sum_{j = 0}^{i} O_j - \gamma \sum_{j = 0}^{i} U_j + \#(T_{i + 1}) \mid \cF_i \right] \\
    &= \sum_{j = 0}^{i} O_j - \gamma \sum_{j = 0}^{i} U_j + \#(T_{i + 1}) = X_{i} .
  \end{align*}
  The remainder of the proof is now quite similar to the stream case. $\#(T_i)
  \leq c'$ where $c'$ is one more than the number of constructors in the
  original term $T_0$. This follows by observing that (i) the step function
  increases the measure by at most the number of constructors or $1$ every
  unfolding step, and (ii) the step function only unfolds if a term reduces to a
  term with non-positive measure. Similarly,
  \[
    \sum_{j = 0}^{i} U_j \geq \lfloor i / c' \rfloor
  \]
  since each unfolding step leads to at most $c'$ output (non-unfolding) steps.

  Since $O_i$ and $U_i$ are both in $\{ 0, 1 \}$, this implies that $|X_{i + 1}
  - X_i|$ is bounded by some constant $c = c' + 2$, depending only on the
  initial term. We can now apply the Azuma-Hoeffding inequality
  (Theorem~\ref{thm:azuma}). For every $n \in \NN$ and $B \geq 0$, we have
  \[
    \Pr_{\sem{T_0}'} [ X_n - X_0 \geq -B ] \geq 1 - \exp(-B^2 / 2 n c) .
  \]
  Taking $B = n^{2/3}$, we have
  \[
    \Pr_{\sem{T_0}'} [ X_n \geq X_0 - n^{2/3} ] \geq 1 - \exp(- n^{1/3} / 2 c) .
  \]
  We also know that the total number of outputs along the path $w$ is at least
  \[
    \sum_{j = 0}^{n} O_j = X_n + \gamma \sum_{j = 0}^{n} U_j  - \#(T_{n + 1})
    \geq X_n + \gamma \lfloor n / c' \rfloor .
  \]
  So if $\gamma > 0$, the stream has zero probability of producing at most $M$
  outputs along $w$ for any finite $M$. This is because for $X_n$ is at least
  $-n^{2/3}$ with probability arbitrarily close to $1$ (for large enough $n$),
  and $\gamma \lfloor n / c' \rfloor$ is growing linearly in $n$ for $\gamma$
  positive. Since the tree term produces at least $M$ outputs along path $w$
  with probability $1$ for every $M$ and every $w$, it is ASP.
\end{proof}
\fi

\subsection{Examples}

We consider a few examples of our analysis. The alphabet $A$ does not affect the
ASP property; without loss of generality, we can let the alphabet $A$ be the
singleton $\{ \star \}$.

\begin{example}
  Consider the stream definition $\sigma = (\cons{\star}{\sigma}) \oplus_p
  \tl(\sigma)$. The $\#$ measure of the stream term is $p \cdot 1 + (1 - p)
  \cdot (-1) = 2p - 1$. By Theorem~\ref{thm:hash:stream}, the stream is ASP when
  $p > 1/2$. 
\end{example}

The measure does not give useful information when $\#$ is not positive.

\begin{example}
  Consider the stream definition $\sigma = (\cons{\star}{\sigma}) \oplus_{1/2}
  \tl(\sigma)$; the $\#$ measure of the term is $0$.  The number of outputs can
  be modeled by a simple random walk on a line, where the maximum position is
  the number of outputs produced by the stream. Since a simple random walk has
  probability $1$ of reaching every $n \in \NN$~\cite{LevinPW09}, the stream term is
  ASP.

  In contrast, $\#(\sigma)$ is $0$ but the stream definition $\sigma = \sigma$
  is clearly non-productive.
\end{example}

We can give similar examples for tree terms.

\begin{example}
  Consider the tree definitions $\tau = e_i$, where
  \begin{itemize}
    \item $e_1 \triangleq \lt(\tau) \oplus_{1/4} \mkt(\star, \tau, \tau)$
    \item $e_2 \triangleq \lt(\tau) \oplus_{1/4} \mkt(\star, \tau, \lt(\tau))$ .
  \end{itemize}
  We apply Theorem~\ref{thm:hash:tree} to deduce ASP.  We have $\#(e_1) = (1/4)
  \cdot (-1) + (3/4) \cdot (+1) = 1/2$, so the first term is ASP.  For the
  second term, $\#(e_2) = (1/4) \cdot (-1) + (3/4) \cdot 0 = -1/4$, so our
  analysis does not give any information.
  
\end{example}

\section{Probabilistic Model{-}Checking for ASP} \label{sec:ppda}

The syntactic analysis for ASP is simple, but it is not complete---no
information is given if the measure is not positive. In this section we give a
more sophisticated, complete analysis by first modeling the operational
semantics of a term by a \emph{Probabilistic Pushdown Automaton} (pPDA), then deciding
ASP by reduction to model{-}checking.

\subsection{Probabilistic Pushdown Automata and LTL}

A pPDA is a tuple $\cA = (S, \Gamma, \trans)$ where $S$ is a finite set of
states and $\Gamma$ is a finite \emph{stack alphabet}.  The \emph{transition
function} $\trans : S \times (\Gamma \cup \{\bot\}) \times S \times
\Gamma^{\ast} \to [0,1]$ looks at the top symbol on the stack (which might be
empty, denoted $\bot$), consumes it, and pushes a (possibly empty, denoted
$\varepsilon$) string of symbols onto the stack, before transitioning to the
next state. A \emph{configuration} of $\cA$ is an element of $\cC = S \times
\Gamma^\ast$, and represents the state of the pPDA and the contents of its stack
(with the top on the left) at some point of its execution. Given a
configuration, the transition function $\trans$ specifies a distribution over
configurations in the next step. Given an initial state $s$ and an initial stack
$\gamma \in \Gamma^*$, $\trans$ induces a distribution $\paths(s, \gamma)$ over
the infinite sequence of configurations starting in $(s, \gamma)$.

Linear Temporal Logic (LTL) \cite{Pnueli1977} is a linear-time temporal logic that
describes runs of a transition system, which in a pPDA correspond to infinite
sequences of configurations. Propositions in LTL are defined by the syntax
\[
  \phi, \psi ::= Q \mid \neg\phi \mid \nextt{\phi} \mid \until{\phi}{\psi} \mid \finally \phi \mid \always \phi
\]
where $\phi,\psi$ are \emph{path formulas}, which describe a particular path,
and $Q$ is a set of atomic propositions. The validity of an LTL formula on a
run $\pi$ of pPDA $\cA$ is defined as follows:
\[
\begin{aligned}
 \pi \models \Phi &\Leftrightarrow \pi[0] \in Q\\
 \pi \models \neg \phi &\Leftrightarrow \pi \not\models \phi\\
 \pi \models \nextt{\phi} &\Leftrightarrow \pi_1 \models \phi
\end{aligned}
\qquad\qquad\qquad
\begin{aligned}
 \pi \models \until{\phi}{\psi} &\Leftrightarrow \exists i. \pi_i \models \psi \wedge \forall j<i. \pi_j \models \phi \\
 \pi \models \finally{\phi} &\Leftrightarrow \exists i. \pi_i \models \phi \\
 \pi \models \always{\phi} &\Leftrightarrow \forall i. \pi_i \models \phi 
\end{aligned}
\]
where atomic propositions $q$ are interpreted as $\sem{q} \subseteq \cC$.  As
expected, path formulas are interpreted in traces of configurations $\pi \in
\cC^\omega$; $\pi[i]$ is the $i$th element in the path $\pi$, and $\pi_i$ is
the suffix of $\pi$ from $\pi[i]$.  


Given a pPDA $\cA$, a starting configuration $(s, \gamma) \in \cC$ and a LTL
formula $\phi$, the \emph{qualitative model{-}checking problem} is to decide
whether runs starting from $(s, \gamma)$ satisfy $\phi$ almost surely, i.e.,
whether $\Pr_{\pi \in \paths(s,\gamma)}[\pi \models \phi]=1$.  The following is
known.

\begin{theorem}[Br{\'a}zdil, et al.~\cite{Brazdil2013}]
  \label{thm:mc-dec}
  The quantitative model{-}checking problem for pPDAs against LTL specifications is decidable.
\end{theorem}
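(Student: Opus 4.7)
The plan is to reduce quantitative LTL model-checking on a pPDA to questions about reachability and repeated reachability probabilities on an augmented pPDA, and then discharge these questions using a decision procedure for the first-order theory of the reals.

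First, I would translate the LTL formula $\phi$ into a deterministic Rabin automaton $\mathcal{R}_\phi$ over the alphabet of atomic-proposition valuations, whose language is exactly the set of infinite words satisfying $\phi$. This is the standard pipeline of Vardi--Wolper to a nondeterministic B\"uchi automaton followed by Safra determinization; the resulting $\mathcal{R}_\phi$ has finitely many states and a Rabin acceptance condition. Next, I would form the synchronous product of the pPDA $\cA$ with $\mathcal{R}_\phi$, absorbing $\mathcal{R}_\phi$'s state component into the finite control of the pPDA. The product is itself a pPDA, and the probability that a run from $(s,\gamma)$ satisfies $\phi$ in $\cA$ equals the probability that the corresponding run of the product satisfies the Rabin condition placed on its finite control. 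Hence the task reduces to computing (or comparing to a rational threshold) the probability that a pPDA run satisfies a fixed Rabin condition.

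Second, I would express this Rabin probability using a finite system of polynomial equations over the reals. The foundation is the classical representation of the \emph{return probabilities} $[p X q]$---the probability that starting from control state $p$ with a single stack symbol $X$ one reaches $q$ with empty stack---as the least nonnegative solution of a polynomial system of size polynomial in $\cA$. From these return probabilities one builds a finite \emph{summary} Markov chain on the control states whose bottom strongly connected components characterize the long-run behaviour of runs whose stack remains bounded, while runs whose stack grows unboundedly are captured by a second level of fixed-point analysis on the same summary. Combining these two contributions, the probability of satisfying the Rabin condition can be written as a semialgebraic expression over the polynomial system defining the return probabilities.

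Third, to decide whether this probability stands in a given relation $\bowtie r$ with $\bowtie \in \{<,\leq,=,\geq,>\}$ and $r \in \mathbb{Q}$, I would encode the polynomial system together with the threshold comparison as a closed sentence in the first-order theory of the reals $(\mathbb{R},+,\cdot,0,1,\leq)$, which is decidable by Tarski--Seidenberg. The main obstacle is the middle step: giving a polynomial-size semialgebraic characterization of the Rabin probability that correctly handles both stack-bounded tails of runs---captured via BSCCs of the summary chain---and tails in which the stack grows without bound, which require a separate fixed-point argument over the same polynomial system. This finite-stack versus infinite-stack case analysis, together with the verification that its bottom SCCs faithfully track the Rabin acceptance condition on the product, is the technical heart of the Br\'azdil et al.\ construction and where essentially all of the real work lies.
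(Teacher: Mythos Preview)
The paper does not prove this theorem at all; it is stated as a cited result of Br\'azdil et al.\ and used as a black box, with no proof or proof sketch given. Your outline---LTL to a deterministic Rabin automaton, synchronous product with the pPDA, expressing the Rabin acceptance probability via the polynomial system for return probabilities $[pXq]$ and an associated summary chain, and finally deciding the threshold comparison in the first-order theory of the reals---is a faithful high-level summary of the approach in the cited literature, and you correctly identify that the delicate part is the split between runs with bounded versus unbounded stack and the corresponding BSCC analysis. But there is simply nothing in the present paper to compare your sketch against: the authors invoke the theorem, note in a footnote that the DRA translation does not blow up for the simple formula $\always\finally\,\cO$ they need, and move on.
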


Almost sure productivity states that an event---namely, producing an
output---occurs infinitely often with probability 1. Such properties can be
expressed in LTL.

\begin{lemma}
  Let $(s, \gamma) \in \cC$ be an initial configuration and $\cB \subseteq \cC$
  be a set of configurations. Then 
  $\Pr_{\pi \in {\paths}(s, \gamma)}[\pi \text{ visits } \cB \text{ infinitely often}] = 1$
  iff $\Pr_{\pi \in {\paths}(s, \gamma)}[\pi \models\always\finally\cB] = 1$.
\end{lemma}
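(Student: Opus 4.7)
The plan is to observe that the two events asserted to have probability one are in fact literally equal as subsets of $\cC^\omega$, so the equivalence is an immediate consequence of unfolding the LTL semantics given earlier in the excerpt; the ``iff'' is therefore trivial because both sides assert that the measure of the same measurable set equals $1$.

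First I would treat $\cB$ as an atomic proposition with interpretation $\sem{\cB} = \cB \subseteq \cC$, so that $\pi_i \models \cB$ is by definition $\pi[i] \in \cB$. Then I would unfold $\pi \models \always\finally\cB$ using the clauses given in the excerpt: by the clause for $\always$ we get $\forall i.\, \pi_i \models \finally\cB$, and applying the clause for $\finally$ to each suffix $\pi_i$ yields $\forall i.\, \exists k.\, (\pi_i)_k \models \cB$, i.e.\ $\forall i.\, \exists k.\, \pi[i+k] \in \cB$. Setting $j = i+k$, this is exactly $\forall i.\, \exists j \geq i.\, \pi[j] \in \cB$, which is the standard definition of ``$\pi$ visits $\cB$ infinitely often.''

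Consequently the two events
\[
  E_1 = \{\pi \in \cC^\omega \mid \pi \text{ visits } \cB \text{ infinitely often}\},
  \qquad
  E_2 = \{\pi \in \cC^\omega \mid \pi \models \always\finally\cB\}
\]
coincide as subsets of $\cC^\omega$. I would briefly note measurability: $E_1$ can be written as $\bigcap_{i} \bigcup_{j \geq i} \{\pi \mid \pi[j] \in \cB\}$, a countable intersection of countable unions of cylinders in the standard $\sigma$-algebra on $\cC^\omega$ generated by cones, so it is measurable with respect to the distribution $\paths(s,\gamma)$. From $E_1 = E_2$ we get $\Pr_{\paths(s,\gamma)}[E_1] = \Pr_{\paths(s,\gamma)}[E_2]$, and in particular one of these probabilities equals $1$ iff the other does, which is the claim.

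There is no real obstacle here; the only subtlety is being careful about the indexing conventions for suffixes (whether $\pi_i$ starts at position $i$ in $\pi$ or at position $0$ of a new sequence), but the clauses given in the excerpt fix this unambiguously, and the re-indexing $j = i+k$ handles the rest.
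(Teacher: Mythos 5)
Your proposal is correct and matches the paper's approach: the paper's proof is simply ``Trivially by computing the semantics,'' which is precisely the unfolding of the $\always\finally$ clauses that you carry out in detail. Your added remarks on re-indexing and measurability of the limsup event are just the explicit version of what the paper leaves implicit.
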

\iffull
\begin{proof}
  Trivially by computing the semantics.
\end{proof}
\fi

We will encode language terms as pPDAs and cast almost sure productivity as an
LTL property stating that configurations representing output steps are reached
infinitely often with probability $1$. Theorem~\ref{thm:mc-dec} then gives a
decision procedure for ASP. In general, this algorithm\footnote{%
  Technically, this algorithm requires first encoding the LTL
  formula into a Deterministic Rabin Automaton (DRA). Even though this encoding
  can in general blow up the problem size exponentially, this is not the case
  for the simple conditions we consider.}
is in {\bf PSPACE}.

\subsection{Modeling streams with pPDAs}

The idea behind our encoding from terms to pPDAs is simple to describe. The
states of the pPDA will represent subterms of the original term, and transitions
will model steps. In the original step relation, the only way a subterm can step
to a non-subterm is by accumulating destructors. We use a single-letter stack
alphabet to track the number of destructors so that a term like $\tl^k(e)$ can
be modeled by the state corresponding to $e$ and $k$ counters on the stack. More
formally, given a stream term $e$ we define a pPDA $\cA_{e} = (\subterm{e}, \{
tl \}, \trans_{e})$, where $\subterm{e}$ is the set of syntactic subterms of $e$
and $\trans_e$ is the following transition function:
\[
\begin{aligned}
  \trans_e((\sigma, a), (e, a)) &= 1 \\
  \trans_e((e_1 \oplus_p e_2, a),(e_1,a)) &= p \\
  \trans_e((e_1 \oplus_p e_2, a),(e_2,a)) &= 1-p
\end{aligned}
\qquad
\begin{aligned}
  \trans_e((\cons{a'}{e'}, \bot),(e',\varepsilon)) &= 1 \\
  \trans_e((\cons{a'}{e'}, tl),(e', \varepsilon)) &= 1 \\
  \trans_e((\tl(e'), a),(e', \cat{tl}{a}) &= 1
\end{aligned}
\]
Above, $\cat{}{}$ concatenates strings and we implicitly treat $a$ as alphabet
symbol or a singleton string.  All non-specified transitions have zero
probability.  We define the set of \emph{outputting configurations} as $\cO
\triangleq \{ s \in \cC \:|\: \exists a',e'.\, s = (\cons{a'}{e'}, \bot) \}$, that
is, configurations where the current term is a constructor and there are no
pending destructors. Our main result states that this set is visited infinitely
often with probability 1 if and only if $e$ is ASP. In fact, we prove something
stronger:

\begin{theorem}\label{thm:ASP-streams-dec}
  Let $e$ be a stream term and let $\cA_{e}$ be the corresponding pPDA.  Then,
  \[
    \Pr_{t \sim \sem{e}} [ t \text{ has infinitely many output nodes} ]
    = \Pr_{\pi \sim \paths(e, \varepsilon)}[\pi \models \always\finally\: \cO] .
  \]
  In particular, $e$ is ASP if and only if for almost all runs $\pi$ starting in $(e, \varepsilon)$,
  $\pi \models \always\: \finally\: \cO$.
\end{theorem}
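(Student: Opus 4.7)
The plan is to exhibit a measure-preserving correspondence between runs of $\cA_e$ started at $(e, \varepsilon)$ and observation streams drawn from $\sem{e}$, under which visits to $\cO$ correspond exactly to concrete output nodes. Once this correspondence is in place, ``the stream contains infinitely many $A$-outputs'' translates directly to ``$\pi \models \always\finally\,\cO$'', and the two probabilities agree.

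First I would verify, by case analysis on the clauses defining $\step_e$, that each big step is faithfully simulated by a short sequence of pPDA transitions. The idea is to represent $\tl^k(t)$ (with $t$ a subterm of $e$) as the configuration with state $t$ and stack $tl^k$, and to check clause-by-clause that: (i) a probabilistic choice is matched by the transition $(e_1 \oplus_p e_2, a) \to (e_i, a)$ with the same weights; (ii) the distribution clause $\step_e(\tl^k(e_1 \oplus_p e_2)) = \step_e(\tl^k(e_1) \oplus_p \tl^k(e_2))$ matches pushing the destructors and then resolving the choice; (iii) the output clause $\step_e(\cons{a}{e'}) = \delta(\inl(a,e'))$ corresponds to traversing a configuration in $\cO$ and then stepping to $(e', \varepsilon)$; (iv) the cancellation clause $\step_e(\tl^k(\cons{a}{e'})) = \step_e(\tl^{k-1}(e'))$ matches the pPDA rule that pops a $tl$ when a constructor meets it on the stack; and (v) the unfolding clause for a $\sigma$-headed term is matched by $(\sigma, a) \to (e, a)$ with probability $1$. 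I would then lift this step-by-step simulation to a measure-preserving map: by induction on $n$, using the compositionality of Kleisli composition in $\cA_e$, the set of pPDA runs whose first $n$ ``big-step blocks'' produce a given prefix $u \in (A_\bot)^n$ has measure equal to $\sem{e}(u(A_\bot)^\omega)$. Since the cylinders $u(A_\bot)^\omega$ generate the $\sigma$-algebra on $\OS$, Carath\'eodory extension and uniqueness of the final coalgebra semantics force the pushed-forward measure to coincide with $\sem{e}$. The event ``infinitely many $A$-outputs'' is then the preimage of ``visits $\cO$ infinitely often'' under this correspondence, giving the desired probability equality.

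The main obstacle I anticipate is handling runs that never reacquire a ``canonical'' anchor configuration---for instance the non-productive $\sigma = \tl(\sigma)$, where the pPDA pushes $tl$ forever while the big-step semantics unfolds the recursion indefinitely, growing the term on each step. One must verify that such runs correspond to all-$\bot$ observation streams with matching probability, so that the block decomposition remains measure-preserving on infinite runs and no mass is lost ``at infinity''. Once this degenerate case is settled, the equivalence reads off immediately, and the second half of the theorem follows because ``$e$ is ASP'' is by definition the event on the left-hand side having probability $1$, which under the correspondence is exactly the model-checking condition $\Pr_{\pi \sim \paths(e, \varepsilon)}[\pi \models \always\finally\,\cO] = 1$.
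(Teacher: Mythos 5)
Your proposal is correct and follows essentially the same route as the paper: both encode a pending-destructor term $\tl^k(t)$ as the configuration $(t, tl^k)$, verify clause-by-clause that the pPDA transitions simulate $\step_e$, and then identify the pushforward of $\paths(e,\varepsilon)$ with $\sem{e}$. The only presentational difference is that the paper first normalizes the automaton so that one transition matches one $\step_e$-step and then exhibits the configuration map as a coalgebra homomorphism, letting finality of the observation-stream coalgebra deliver the measure identification, whereas you re-derive that identification by hand via cylinder sets and Carath\'eodory extension and explicitly flag the never-outputting runs (a case the coalgebraic argument absorbs automatically, since unfold steps still emit $\bot$ in the observation stream).
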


\iffull
\begin{proof}

The first part of the proof consists on simplifying the automaton $\cA_{e}$ into
a new automaton with a transition function $\trans'$ that is synchronized to the step
function for streams considered in Section~\ref{sec:hash}, while preserving the
validity of $\pi \models \always\finally\: \cO$ for every path $\pi$.  The
simplified transition function needs to skip over all states of the form
$\tl(e')$ or $e_1 \oplus e_2$ until it reaches a state of the form
$\cons{a}{e'}$ or $\sigma$. We proceed in two steps.
\begin{enumerate}
  \item For every state of the form $\tl^{k+1}(e)$ such that $e$ does not have a $tl$
    on top, we add a new transition from $\tl^{k+1}(e)$ to $e$ so that
    $\trans'((\tl^{k+1}(e), a), (e, \cat{tl^{k+1}}{a})) = 1$ and we remove the
    transition from $\tl^{k+1}(e)$ to $\tl^k(e)$. Then we remove unreachable states.
  
  \item States of the form $e_1 \oplus_p e_2$ are removed, and for every
    transition from some $e$ to $e_1 \oplus_p e_2$ such that
    $\trans((e,a), (e_1 \oplus_p e_2, \gamma)) = q > 0 $, we add new
    transitions from $e$ to $e_1$ and $e_2$ so that $\trans'((e,a), (e_1,
    \gamma)) = p q$ and $\trans'((e,a), (e_2, \gamma)) = (1-p) q$.
\end{enumerate}

Notice that this step can be removed if we construct a reduced pPDA from the beginning,
the choice of the given construction is motivated by clarity.

Now, the transition function induces a map $\overline{\trans} : \cC \to
\Dist(A \times \cC + \cC)$ from configurations to output distributions over
configurations and outputs from one step of the pPDA:
\begin{align*}
 \overline{\trans}(\sigma, \gamma)
 &\triangleq \delta(\inr(e, \gamma)) \\
 \overline{\trans}(e_1 \oplus_p e_2, \gamma)
 &\triangleq p \cdot \overline{\trans}(e_1, \gamma) + (1-p) \cdot \overline{\trans}(e_2, \gamma) \\
 \overline{\trans}(\cons{a}{e'}, \varepsilon)
 &\triangleq \delta(\inl(u, (e', \varepsilon))) \\
 \overline{\trans}(\cons{a}{e'}, \cat{tl}{\gamma})
 &\triangleq \overline{\trans} (e', \gamma) \\
 \overline{\trans}(\tl(e'), \gamma)
 &\triangleq \overline{\trans} (e', \cat{tl}{\gamma}) .
\end{align*}

Hence, $(\cC, \overline{\trans})$ and $(\TT, \step)$ are coalgebras of the same
functor. We can now build a map $f : \TT \rightarrow \cC$ from terms to
configurations:
\begin{align*} 
 f(\sigma) &\triangleq (\sigma, \varepsilon) \\
 f(\tl^k(\sigma)) &\triangleq (\sigma, tl^k) \\
 f(e_1 \oplus_p e_2) &\triangleq (e_1 \oplus_p e_2, \varepsilon) \\
 f(\tl^k(e_1 \oplus_p e_2)) &\triangleq (e_1 \oplus_p e_2, tl^k) \\
 f(\cons{\hat{u}}{e'}) &\triangleq (\cons{\hat{u}}{e'}, \varepsilon) \\
 f(\tl^k(\cons{\hat{u}}{e'})) &\triangleq (\cons{\hat{u}}{e'}, tl^k) .
\end{align*}
%
%
For every term $e$, we have
\[
  \overline{\trans}(f(e)) = {\rm case}(\step(e), \inl(a,e_1) \mapsto \inl(a, f(e_1)), \inr(e_2) \mapsto \inr(f(e_2)))
\]
Therefore, $f$ is a coalgebra homomorphism. By finality, this means that for all
$e \in \TT$, $\sem{e} = \paths(f(e))$, and so we conclude
\[
  \Pr_{t \sim \sem{e}} [ t \text{ has infinitely many output nodes} ]
  = \Pr_{\pi \sim {\paths}(f(e))}[\pi \models \always\finally\: \cO] .
\]
\end{proof}
\fi

By Theorem~\ref{thm:mc-dec}, ASP is decidable for stream terms. In fact, it is
also possible to decide whether a stream term is almost surely \emph{not}
productive, i.e., the probability of producing infinitely many outputs is zero.

 
\subsection{Extending to trees}
Now, we extend our approach to trees. The main difficulty can be seen in the
constructors. For streams, we can encode the term $\cons{a}{e}$ by proceeding to
the tail $e$. For trees, however, how can we encode $\mkt(a, e_1, e_2)$? The
pPDA cannot step to both $e_1$ and $e_2$. Since the failure of ASP may occur
down either path, we cannot directly translate the ASP property on trees to
LTL---ASP is a property of \emph{all} paths down the tree. Instead, on
constructors our pPDA encoding will choose a path at random to simulate. As we
will show, if the probability of choosing a path that outputs infinitely often
is 1, then every path will output infinitely often. Notice that in general,
properties that happen with probability 1 do not necessarily happen for every
path, but the structure of our problem allows us to make this generalization.

More formally, the stack alphabet will now be $\{rt, lt\}$, and on constructors
we transition to each child with probability $1/2$:
\[
\begin{aligned}
  \trans_e((\tau, a), (e, a)) &= 1 \\
  \trans_e((e_1 \oplus_p e_2, a),(e_1, a)) &= p \\
  \trans_e((e_1 \oplus_p e_2, a),(e_2, a)) &= 1 - p \\
  \trans_e((\mkt(a', e_l, e_r), \bot),(e_l,\varepsilon)) &= 1/2 \\
  \trans_e((\mkt(a', e_l, e_r), \bot),(e_r,\varepsilon)) &= 1/2 \\
\end{aligned}
\qquad
\begin{aligned}
  \trans_e((\mkt(a', e_l, e_r), lt),(e_l,\varepsilon)) &= 1 \\
  \trans_e((\mkt(a', e_l, e_r), rt),(e_r,\varepsilon)) &= 1 \\
  \trans_e((\lt(e'), a),(e', \cat{lt}{a})) &= 1 \\
  \trans_e((\rt(e'), a),(e', \cat{rt}{a})) &= 1
\end{aligned}
\]
We define $\cO \triangleq \{ s \in \cC \:|\: \exists a',e_l,e_r.\, s =
(\mkt(a',e_l,e_r), \varepsilon)\}$ to be the set of outputting configurations,
and we can characterize ASP with the following theorem.

\begin{theorem}
  Let $e$ be a tree term and $\cA_{e}$ be the corresponding probabilistic PDA. Then
  $
    \Pr_{\pi \sim \paths(e, \bot)}[\pi \models \always\finally\: \cO] = 1
  $
  if and only if for every $w \in \{L,R\}^\omega$,
  \[
    \Pr_{t \sim \sem{e}} [t \text{ has infinitely many output nodes along } w] = 1 .
  \]
\end{theorem}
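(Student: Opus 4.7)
My plan is to mirror Theorem~\ref{thm:ASP-streams-dec}, enriched with a Markov-chain argument exploiting the structure of the problem to upgrade ``almost every $w$'' to ``every $w$''. The first step is to build a coupling identifying pPDA trajectories from $(e, \bot)$ with pairs $(t, w)$ where $t \sim \sem{e}$ and $w$ is sampled from the uniform measure $\mu$ on $\{L,R\}^\omega$. Concretely, I would simplify the transition function as in the stream case and construct a coalgebra morphism from $(\TT, \step)$ to the simplified configuration coalgebra on $\cC$, appealing to finality. The extra $1/2$--$1/2$ coins at constructor configurations are precisely the mechanism that samples $w$. Under this coupling, $\always\finally\cO$ on the pPDA side corresponds to ``$t_w$ has infinitely many output nodes'' on the tree side, giving the Fubini identity
\[
  \Pr_{\cA_e, (e, \bot)}[\always\finally\cO] \;=\; \int_{\{L,R\}^\omega} \Pr_{t \sim \sem{e}}[t_w \text{ has i.o.\ outputs}] \, d\mu(w).
\]

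The $(\Leftarrow)$ direction is immediate from this identity. For $(\Rightarrow)$, it only supplies ``$\mu$-almost every $w$ works'', so the main work is upgrading to every $w$. To this end, for each configuration $c$ let $\rho(c)$ denote the probability of ever reaching a configuration in $\cO$ at some later step. Since $\cO$ is hit the instant a constructor with empty stack appears, no constructor coin is consulted before the first hit; consequently, starting from any non-constructor configuration, $\rho(c)$ coincides in $\cA_e$ and in the variant $\cA_e^w$ in which constructor branching follows $w$. A standard Markov-chain argument now shows that the hypothesis $\Pr_{\cA_e,(e,\bot)}[\always\finally\cO] = 1$ forces $\Pr_{\cA_e, c'}[\always\finally\cO] = 1$ at every $c'$ reachable with positive probability in $\cA_e$, and in particular yields $\rho(e_l, \varepsilon) = \rho(e_r, \varepsilon) = 1$ at both children of every reachable constructor configuration $(\mkt(a, e_l, e_r), \varepsilon)$.

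Now fix any $w$. Running $\cA_e^w$ from $(e, \bot)$, the first constructor is hit almost surely because $\rho(e, \bot) = 1$; the resulting configuration is reachable in $\cA_e$ (the hitting trajectories agree), so its $w$-dictated child is reachable in $\cA_e$ via the $1/2$ coin, hence its $\rho$ equals $1$, and the next constructor is hit almost surely. Iterating produces an almost-sure infinite sequence of constructor visits in $\cA_e^w$, which via the coupling gives $\Pr_t[t_w \text{ has i.o.\ outputs}] = 1$ for the arbitrary $w$. The main obstacle is precisely this ``almost every'' to ``every'' upgrade; the structural fact that unlocks it is that first-passage to a constructor never consults the constructor coin, so the almost-sure reachability inherited from the LTL hypothesis transfers uniformly to every deterministic path $w$.
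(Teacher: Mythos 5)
Your proposal is correct, and its first half---the coupling of pPDA runs with pairs $(t,w)$, $w$ uniform, via finality, and the resulting Fubini identity---is exactly the paper's strategy (realized there as a natural transformation $\rho_X(\inl(a,x,y)) = \tfrac12\delta(\inl(a,x)) + \tfrac12\delta(\inl(a,y))$ together with Lemmas~\ref{lem:trees-distr-equiv} and~\ref{lem:int-equiv}). Where you genuinely diverge is at the crux, upgrading ``for almost every $w$'' to ``for every $w$''. The paper's Lemma~\ref{lem:trees-asp-equiv} does this purely measure-theoretically on an arbitrary $M \in \Dist(\OT)$: if ASP fails along a fixed $w$, then by continuity from above the set $T_k$ of trees with at least $k$ outputs along the length-$k$ prefix $w_k$ already has measure $<1$ for some finite $k$; membership in $T_k$ depends only on $w_k$, so the failure spreads to the whole cone $C_k$, which has $U$-measure $2^{-k}>0$, making the averaged quantity $<1$. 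Your argument is instead operational: the branching coin is consulted exactly at configurations of $\cO$, so first passage to $\cO$ has the same law in $\cA_e$ and in the $w$-determinized chain $\cA_e^w$; shift-invariance of $\always\finally\cO$ propagates the almost-sure hypothesis to every configuration reachable with positive probability; and since every configuration reachable in $\cA_e^w$ is reachable in $\cA_e$ (each deterministic branch is taken there with probability $1/2>0$), induction yields almost-surely infinitely many $\cO$-visits in $\cA_e^w$. Both routes are sound. The paper's version buys generality and stays entirely at the semantic level, independent of the automaton; yours is arguably more elementary and intuitive, but it leans on the pPDA structure and leaves two small obligations implicit: justifying the coupling between $\cA_e^w$ and the $w$-section of $\sem{e}$ (essentially a $w$-indexed family of instances of the stream-case finality argument), and handling the boundary case where a $w$-dictated child is itself already in $\cO$ so that ``first passage'' should be read as ``hit at some step $\geq 0$''.
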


Thus we can decide ASP by deciding a LTL formula. 

\iffull
\begin{proof}
The main result we need to prove is that given a distribution $\mu$ over $\OT$
and the distribution $\mu'$ over $\OS$ induced by $\mu$,
\[
  \Pr_{\pi \sim \mu'} [\pi \models \always\finally\: \cO] = 1
  \iff
  \forall w \in \{L,R\}^\omega.\,
  \Pr_{t \sim \mu} [\pi \text{ has infinitely many output nodes along } w ] = 1 .
\]
After this, all that remains is to check that the distribution over the runs of $\cA_e$
starting on $(e, \varepsilon)$ is exactly $\mu'$, which is done using similar techniques
as in the proof of \ref{thm:ASP-streams-dec}.

We start by showing how to compute this induced distribution. Let $F : X
\mapsto A \times X + X$ and $G : X
\mapsto A \times X \times X + X$ be the functors that generate $\OS$
and $\OT$ respectively. We define a
natural transformation $G \stackrel{\rho}{\Rightarrow} \Dist F$, which will
allow us to transform $G$-coalgebras into $\Dist F$-coalgebras, and in
particular $\OT$ into $\Dist(\OS)$.
We assign to every object $X$ a morphism $\rho_X: GX \to \Dist FX$ as follows:
   \begin{align*}
     \rho_X &: A \times X \times X + X \to \Dist(A \times X + X)\\
     \rho_X(\inr(x))  &= \delta(\inr(x))\\
     \rho_X(\inl(a, x, y))  &= 1/2 \cdot \delta(\inl(a,x)) + 1/2 \cdot \delta(\inl(a,y))
   \end{align*}
This gives us a map $f$ from $\OT$ to $\Dist(\OS)$ as the unique coalgebra
homomorphism closing the following commutative diagram (in the Kleisli category):

\begin{tikzcd}[column sep = 3cm, row sep = 2cm]
  \OT \ar[r, "f", "\circ" anchor=center, dashrightarrow]
      \ar[d, "\rho_{\OT} \circ \langle \out \text{,} \unf \rangle^{-1}", "\circ" anchor=center]
  & \OS \ar[d, "\langle \hd \text{,} \tl \rangle", "\circ" anchor=center] \\
  F(\OT) \ar[r, "Ff", "\circ" anchor=center, dashrightarrow]
  & F(\OS)  
\end{tikzcd}

We can see that, by uniqueness:
   \begin{align*}
     f &: \OT \to \Dist(\OS)\\
     f\: \unf(x)  &= \Dist(\unf(\bullet)) (f(x))\\
     f\: \out(a, x, y)  &= 1/2 \cdot \Dist(\cons{a}{\bullet})(f(x)) + 1/2 \cdot \Dist(\cons{a}{\bullet})(f(y)) .
   \end{align*}
Since $\Dist$ is a monad, we can extend $\rho$ to a natural transformation
$\Dist G \stackrel{\bar{\rho}}{\Rightarrow} \Dist F$ assigning to every $X$ a
morphism $\bar{\rho}_X = m_{FX} \circ \Dist\rho_X$, where $m$ is the product
of the $\Dist$ monad. Given a final $\Dist G$-coalgebra $(\Dist(\OT), h)$, we
have a $\Dist F$-algebra $(\Dist(\OT), \rho_{\Dist(\OT)}\circ h)$ so there is a
unique $\Dist F$-homomorphism $\hat{f}$ to the final $\Dist F$-coalgebra $(\Dist(\OS),
g)$. This allows us to give semantics in $D(OS)$ to a tree term:

\begin{tikzcd}[column sep = large, row sep = large]
  \OT \ar[d, "h", "\circ" anchor=center]
             \ar[rrr, "\hat{f}", "\circ" anchor=center, bend left = 30, dashrightarrow] 
  & \TT \ar[l, "\sem{-}_\OT", "\circ" anchor=center, dashrightarrow ]
        \ar[r, equal ]
        \ar[d, "\step", "\circ" anchor=center]
  & \TT \ar[d, "\bar{\rho}_\TT \circ \step", "\circ" anchor=center]
        \ar[r, "\sem{-}_\OS", "\circ" anchor=center, dashrightarrow ]
  & \OS \ar[d, "g", "\circ" anchor=center] \\
    G(\OT) \ar[d, "\bar{\rho}_{(\OT)}", "\circ" anchor=center]
  & G(\TT) \ar[l, "\circ" anchor=center, dashrightarrow]
                 \ar[r, "\bar{\rho}_\TT", "\circ" anchor=center ]
  & F(\TT) \ar[r, "\circ" anchor=center, dashrightarrow ]
  & F(\OS) \\
    F(\OT) \ar[urrr, bend right = 10, dashrightarrow, "F\hat{f}", "\circ" anchor=center]
  &
  &
\end{tikzcd}

Notice that, by uniqueness:
  \[
    \hat{f}(M)(E) = (m_{\OS} \circ (\Dist f))(M)(E) =
    \int_{t\in\OT} \left(  \int_{\pi\in\OS} \chi_E(\pi) d f(t) \right) d M
  \]
  where $\chi_E$ is the characteristic function of $E \subseteq \OS$.
  Now, let
  \begin{align*}
    S &= \{ \pi \:|\: \pi \models \always\finally\: \cO \}\\
    P_\pi &= \{ t \:|\: t_\pi \in S \}
  \end{align*}
  where $t_\pi$ for $\pi \in \{L,R\}^\omega$ is the path in $t$ corresponding to
  the choices $\pi$:
  \begin{align*}
    \unf(x)_{L:w} &= \unf(x_{L:w}) \\
    \unf(x)_{R:w} &= \unf(x_{L:w}) \\
    \out(a, x, y)_{L:w} &= \cons{a}{x_{w}} \\
    \out(a, x, y)_{R:w} &= \cons{a}{y_{w}}
  \end{align*} 
  Then
  \[\Pr_{\pi \sim \mu}[\pi \models \always\finally\: \cO] = \int_{\pi\in\OS} \chi_S(\pi) d\mu \]
  and
  \[\Pr_{t \sim \mu} [ t_\pi \text{ has infinitely many output nodes} ] = \int_{t\in\OT} \chi_{P_\pi}(t)d\mu . \]

  The rest of the proof proceeds in three steps. In the following, let $U$ be
  the distribution on $\{L,R\}^\omega$ assigning probability $(1/2)^k$ to every
  cone generated by a prefix of length $k$.
  First we need the following lemma stating that the distribution induced by $f$
  on $\OS$ is the same as the distribution induced by taking paths sampled from $U$.
  Intuitively, we are just pre-sampling the randomness in $f$ from $U$:
  \begin{lemma}\label{lem:trees-distr-equiv}
    Let $t \in \OT$. Then
    \[
      \int_{w \in \{L,R\}^\omega} \chi_{P_w}(t)dU = \int_{\pi \in \OS} \chi_S(\pi)df(t) .
    \]
  \end{lemma}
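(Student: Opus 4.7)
The plan is to upgrade the claimed integral identity to the stronger statement that the two probability measures on $\OS$ coincide: $f(t) = g(t)$, where $g(t)$ is the pushforward of the uniform measure $U$ along the path-selection map $w \mapsto t_w$, i.e.\ $g(t)(E) = U(\{w \in \{L,R\}^\omega \mid t_w \in E\})$ for every measurable $E \subseteq \OS$. Once this measure-level equality is in hand, instantiating at $E = S$ and rewriting each side as the integral of the corresponding indicator gives the lemma, since $\chi_{P_w}(t) = 1$ iff $t_w \in S$.

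To prove $f = g$, I would invoke the finality of $(\Dist(\OS), g)$ from Kerstan--K\"onig, which makes $f$ the unique Kleisli coalgebra homomorphism $\OT \to \OS$ induced by $\rho_{\OT} \circ \langle \out, \unf \rangle^{-1}$. It therefore suffices to check that $g$ satisfies the same two recursive equations as $f$, namely $g(\unf(x)) = \Dist(\unf(\bullet))(g(x))$ and $g(\out(a,x,y)) = \tfrac{1}{2}\,\Dist(\cons{a}{\bullet})(g(x)) + \tfrac{1}{2}\,\Dist(\cons{a}{\bullet})(g(y))$. For an $\unf$-node, $\unf(x)_w = \unf(x_w)$ for every $w$, so $g(\unf(x))(E) = U(\{w \mid x_w \in \unf^{-1}(E)\}) = g(x)(\unf^{-1}(E)) = \Dist(\unf(\bullet))(g(x))(E)$; for an $\out$-node, splitting $w$ into $L\!:\!w'$ and $R\!:\!w'$ realises the $\tfrac{1}{2}$-$\tfrac{1}{2}$ mixture because $U$ assigns probability $\tfrac{1}{2}$ to each first-letter cone and $\out(a,x,y)_{L:w'} = \cons{a}{x_{w'}}$ (and symmetrically for $R$). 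Uniqueness of the coalgebra homomorphism then forces $g = f$.

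A backup route, if the finality argument is felt to be delicate in the Kleisli setting, is to check $f(t)(u(A_\bot)^\omega) = g(t)(u(A_\bot)^\omega)$ on every cone by induction on $|u|$: the inductive step splits into a $\bot$-case that peels off an $\unf$ layer and an $a$-case that uses the $\tfrac{1}{2}$-$\tfrac{1}{2}$ split, mirroring the two recursive equations above, after which Dynkin's $\pi$-$\lambda$ theorem extends the equality to the full $\sigma$-algebra generated by cones. The main obstacle I anticipate is bureaucratic rather than mathematical: one must confirm that $w \mapsto t_w$ is measurable so that $g(t)$ is a well-defined Borel probability measure on $\OS$, and that the finality/uniqueness step is carried out inside $\Kl(\Dist)$ as required. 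Once these points are nailed down, the equality is forced by the definitions.
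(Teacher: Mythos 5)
Your proposal is correct and follows essentially the same route as the paper: the paper also establishes the measure-level identity $\int_{w} \chi_{B}(t_w)\,dU = \int_{\pi}\chi_B(\pi)\,df(t)$ for all measurable $B$ by introducing the pushforward of $U$ along $w \mapsto t_w$ (presented there as a map $g\ t\ N$ parametrized by an arbitrary distribution $N$ on $\{L,R\}^\omega$, specialized to $N = U$) and identifying it with $f(t)$ via the same corecursive equations and uniqueness. Your write-up is in fact somewhat more explicit about the finality step in $\Kl(\Dist)$ and the measurability of $w \mapsto t_w$, which the paper leaves as ``easy to see.''
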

  \begin{subproof}[Proof of Lemma~\ref{lem:trees-distr-equiv}]
    We show that for every measurable $B \subseteq \OS$,
    \[
      \int_{w \in \{L,R\}^\omega } \chi_{B}(t_w) dU
      = \int_{\pi \in \OS} \chi_{B}(\pi) df(t) .
    \]
    For any distribution $N \in \Dist(\{L,R\}^\omega)$, there is an induced
    distribution by $t$ and $N$ on $\Dist(\OS)$:
    \begin{align*} 
      g &: \OT \to \Dist(\{L,R\}^\omega) \to \Dist(\OS)\\
      g\: \unf(x)\: N  &= \Dist(\unf(\bullet)) (g\ x\ N) \\
      g\: \out(n, x, y)\: N  &= \Pr_{w \sim N}[w[0] = L] \cdot \Dist(\cons{n}{\bullet})(g\: x\: (N\mid_{\tl}^L) )
      + \Pr_{w \sim N}[w[0] = R] \cdot \Dist(\cons{n}{\bullet})(g\: y\: (N\mid_{\tl}^R) )
    \end{align*}
    where $N\mid_{\tl}^X$ is the distribution on the tails of $N$ conditioned to the head being $X$.
    (if it is empty, we just make that side of the sum 0)
    What we are doing is taking from the first position of a $w$ sampled from $N$ the randomness
    for deciding which branch of the tree to take.
    In particular, it is easy to see that
    $g\ t\ U = f(t)$. Therefore,
    \[
      \int_{w \in \{L,R\}^\omega } \chi_{B}(t_w) dU
      = \int_{\pi \in \OS} \chi_{B}(\pi)d(g\ t\ U)
      = \int_{\pi \in \OS} \chi_{B}(\pi) df(t) .
    \]
  \end{subproof}

  Then following lemma shows that a tree produces paths with infinitely many outputs along every $w$
  in $\{L,R\}^\omega$ with probability 1 if and only if it produces paths with
  infinitely many outputs
  along a $w$ sampled from $\{L,R\}^\omega$ with probability 1. In other words, it
  provides a connection between the universal quantification in the definition
  of ASP and the probabilistic nature of LTL.
  \begin{lemma}\label{lem:trees-asp-equiv}
    Let $M \in \Dist(\OT)$. Then, the following are equivalent:
    \begin{enumerate}
      \item For every $w \in \{L,R\}^\omega$,  $\int_{\OT} \chi_{P_w}dM = 1$.
      \item $\int_{w \in \{L, R\}^\omega} \int_{\OT} \chi_{P_w} dM dU = 1$.
    \end{enumerate}
  \end{lemma}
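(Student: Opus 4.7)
Direction (1) $\Rightarrow$ (2) is immediate: if $\int_{\OT} \chi_{P_w}\, dM = 1$ for every $w$, then integrating against $U$ still gives $1$.

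The content is in direction (2) $\Rightarrow$ (1), and the plan is to reduce it to a purely combinatorial observation about paths in a tree. Specifically, I would first isolate the following \emph{cone property}: for every $t \in \OT$ and every $w \in \{L, R\}^\omega$, if $t_w$ has only finitely many output nodes, then there exists $k_0 \in \NN$ such that every $w' \in \{L, R\}^\omega$ agreeing with $w$ on its first $k_0$ letters also yields a path $t_{w'}$ with only finitely many output nodes. Consequently, the set $B(t) = \{ w : t_w \text{ has finitely many outputs} \}$ is either empty or contains a cone, and in the latter case has strictly positive $U$-measure (at least $(1/2)^{k_0}$).

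Given the cone property, (2) $\Rightarrow$ (1) follows quickly. Under assumption (2), applying Fubini to $\chi_B$ on the product space (with $B = \{(w,t) : t_w \text{ bad}\}$) yields $U(B(t)) = 0$ for $M$-almost every $t$, which by the cone property forces $B(t) = \emptyset$ for those $t$. Hence for $M$-a.e.\ $t$, \emph{every} $w$ yields a good path, and in particular for every fixed $w$ we have $M(\{ t : t_w \text{ has finitely many outputs} \}) = 0$, which is (1).

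The main obstacle is justifying the cone property rigorously. The intuition is that once $t_w$ enters the infinite $\bot$-chain responsible for it being bad, the remaining letters of $w$ are never consumed, since $\bot$-nodes have a unique child. Formally, if $k$ is a step from which $t_w$ visits only $\bot$-nodes and $k_0$ is the number of output nodes visited strictly before step $k$, then a straightforward induction on the step index $j \leq k$ shows that $t_w$ and $t_{w'}$ visit identical nodes whenever $w$ and $w'$ agree on their first $k_0$ letters: at output nodes both consume the same letter (by the agreement on the first $k_0$ characters, since at most $k_0$ letters have been consumed), while at $\bot$-nodes both follow the unique child. From step $k$ onward both paths stay inside the same $\bot$-chain and produce no further outputs, so $t_{w'}$ is bad as well. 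This combinatorial argument is where essentially all the work of the lemma lies; the measure-theoretic step on top of it is just Fubini.
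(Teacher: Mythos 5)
Your proof is correct, and it reorganizes the argument differently from the paper, although both rest on the same combinatorial core (the segment of $t_w$ up to its $k$-th output node depends only on the first $k$ letters of $w$). The paper argues by contraposition on the tree side: fixing a bad word $w$, it introduces the nested measurable sets $T_n$ of trees reaching $n$ outputs along the length-$n$ prefix of $w$, uses $P_w = \bigcap_n T_n$ and continuity from above to find a finite $k$ with $M(T_k) < 1$, and then bounds the contribution of the cone $C_k$ to the iterated integral directly in the given order of integration, never swapping the integrals. You instead fix the tree $t$ and observe that the bad set $B(t) \subseteq \{L,R\}^\omega$ is a union of cones (your ``cone property''), hence null only if empty; Tonelli then converts hypothesis (2) into $U(B(t)) = 0$ for $M$-almost every $t$, forcing $B(t) = \emptyset$ almost everywhere. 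Your route buys a cleaner conceptual statement --- the set of bad directions is open, and nonempty open sets have positive $U$-measure --- at the price of needing joint measurability of $(w,t) \mapsto \chi_{P_w}(t)$ to justify the swap; this is routine (the relevant sets are countable unions of rectangles of cones), and the paper itself invokes Fubini in the adjacent Lemma on $\hat f$, so it is not an extra burden in context. The paper's route avoids the swap but needs the monotone-limit argument on the $T_n$. Your justification of the cone property (after the last output node the path lives in a $\bot$-chain, so only finitely many letters are ever consumed, and agreement on those letters forces identical paths) is exactly the fact the paper uses implicitly when it asserts $\int_{v \in C_k}\int_{T_k^c} \chi_{P_v}\,dM\,dU = 0$.
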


  \begin{subproof}[Proof of Lemma~\ref{lem:trees-asp-equiv}]
    (1 $\Rightarrow$ 2) is immediate. To prove (2 $\Rightarrow$ 1),
    we suppose that there is a $w \in \{L,R\}^\omega$ such that $\int_{\OT}
    \chi_{P_w}dM < 1$, and we show that this must also be true for a $W \subseteq
    \{L,R\}^\omega$ such that $\int_{v \in \{L,R\}^\omega} \chi_W dU > 0$, and therefore
    $\int_{v \in \{L, R\}^\omega} \int_{\OT} \chi_{P_v} dM dU < 1$.

  To do this, we
  consider the set $\{w_n\}_{n\in\NN}$ of prefixes of $w$ of length $n$, and the cones
  $\{C_n\}_{n\in\NN}$ generated by them. For each of those prefixes $w_i$, we can compute the set
  of observation trees $T_i$ such that every $t \in T_i$ has $i$ output nodes along
  $w_i$. This set is measurable (the union of cones of all finite trees of height $i$
  satisfying the conditions),
  $T_1 \supseteq T_2 \supseteq T_3 \supseteq \dots$, and therefore:
  \[ 1 \geq \int_{T_1}dM \geq \int_{T_2}dM \geq \int_{T_3}dM \dots \]
  But since $P_w = \cap_{n} T_n$, the limit of this sequence is exactly $\int_{\OT}
  \chi_{P_w}dM < 1 $. Therefore $\int_{T_k}dM < 1$ for some $k$, and so
  \begin{align*}
    \int_{v \in C_k} \int_{\OT} \chi_{P_v} dM dU &= \int_{v \in C_k}
    \int_{T_k} \chi_{P_v} dM dU + \int_{v \in C_k} \int_{T_k^c} \chi_{P_v} dM dU = \\
    &= \int_{v \in C_k}\int_{T_k} \chi_{P_v} dM dU + 0 < \int_{v \in C_k} dU .
  \end{align*}
  Hence, we conclude
  \begin{align*}
    \int_{v \in \{L,R\}^\omega} \int_{\OT} \chi_{P_v} dM dU &=
    \int_{v \in C_k} \int_{\OT} \chi_{P_v} dM dU + \int_{v \in C_k^c} \int_{\OT} \chi_{P_v} dM dU \\
    &< \int_{v \in C_k}  dU + \int_{v \in C_k^c}  dU = 1 .
  \end{align*}

  \end{subproof}


   
  Finally we show that given $M \in \Dist(\OT)$, $\hat{f}(M)$
  produces paths with infinitely many outputs with probability 1 if and only if $\Dist(\OT)$ is ASP.
  \begin{lemma}\label{lem:int-equiv}
    Let $M \in \Dist(\OT)$. Then,
    \[
      \int_{\OT} \int_{\OS} \chi_S d f(t) dM = 1
      \iff
      \forall  w \in \{L,R\}^\omega.\,  \int_{\OT} \chi_{P_w}dM = 1 .
    \]
  \end{lemma}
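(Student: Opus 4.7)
The plan is to combine the two preceding lemmas via Fubini's theorem. Concretely, the left-hand side is an integral over $\OT$ of a quantity that Lemma~\ref{lem:trees-distr-equiv} identifies with an integral over paths $w$, so the goal is to swap the order of integration and then apply Lemma~\ref{lem:trees-asp-equiv}.

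First I would rewrite the left-hand side using Lemma~\ref{lem:trees-distr-equiv}. For each fixed $t \in \OT$ we have
\[
  \int_{\OS} \chi_S(\pi) df(t) = \int_{w \in \{L,R\}^\omega} \chi_{P_w}(t) dU,
\]
so integrating over $t \sim M$ yields
\[
  \int_{\OT} \int_{\OS} \chi_S(\pi) df(t) dM = \int_{\OT} \int_{w} \chi_{P_w}(t) dU \, dM.
\]

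Next I would apply Fubini/Tonelli to exchange the order of integration; this is justified because $\chi_{P_w}(t)$ is a bounded, jointly measurable indicator and both $M$ and $U$ are probability measures. The right-hand side above therefore equals
\[
  \int_{w} \int_{\OT} \chi_{P_w}(t) dM \, dU.
\]

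Finally I would invoke Lemma~\ref{lem:trees-asp-equiv}, which asserts precisely that this double integral equals $1$ if and only if $\int_{\OT} \chi_{P_w} dM = 1$ for every $w \in \{L,R\}^\omega$. Chaining the three equivalences closes the lemma. The only mildly delicate step is the joint measurability needed to justify Fubini; this follows from the fact that the sets $P_w$ are cones (or countable intersections of cones) in the standard $\sigma$-algebra on $\OT$ and that the map $w \mapsto P_w$ is measurable by construction of the path-selection operation $t_w$, so no genuine obstacle arises.
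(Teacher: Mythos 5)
Your proof is correct and follows exactly the route the paper takes: the paper's own justification is the one-line remark that the lemma is immediate from Lemma~\ref{lem:trees-distr-equiv}, Fubini's theorem, and Lemma~\ref{lem:trees-asp-equiv}, which is precisely the chain of rewrites you spell out. Your additional remarks on joint measurability only make explicit what the paper leaves implicit.
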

  This is immediate by Lemma~\ref{lem:trees-asp-equiv}, Fubini's theorem
  and Lemma~\ref{lem:trees-distr-equiv}.
  


\end{proof}
\fi

\section{Possible Generalizations and Extensions} \label{sec:discussion}

Our definition of ASP and our verification approaches suggest several natural
directions.

\subparagraph*{Handling Richer Languages}
The most concrete direction is to consider richer languages for coinductive
probabilistic programming. Starting from our core language, one might consider
allowing more operations on coinductive terms, mutually recursive definitions,
or conditional tests of some kind.  It should also be possible to develop
languages for more complex coinductive types associated with general polynomial
functors (see, e.g., Kozen~\cite{DBLP:journals/entcs/Kozen11}). Note that adding
more operations, e.g. pointwise $+$ of streams would increase the expressivity
of the language but raise additional challenges from the perspective of the
semantics---we would have to add extra structure to the base category and
re-check that the finality proof. 

Developing new languages for coinductive probabilistic programming---perhaps an
imperative language or a higher-order language---would also be interesting.
From the semantics side, our development in Section~\ref{sec:def-asp} should
support any language equipped with a small-step semantics producing output
values, allowing ASP to be defined for many kinds of languages. The verification
side appears more challenging. Natural extensions, like a pointwise addition
operation, already seem to pose challenges for the analyses. We know of no
general method to reasoning about ASP. This stands in contrast to almost sure
termination, which can be established by where flexible criteria like decreasing
probabilistic variants~\cite{DBLP:journals/toplas/HartSP83}.  Considering
counterparts of these methods for ASP is an interesting avenue of research.

\subparagraph*{Exploring Other Definitions}
Our definition of ASP is natural, but other definitions are
possible. For trees (and possibly more complex coinductive
structures), we could instead require that there \emph{exists} a path
producing infinitely many outputs, rather than requiring that \emph{all}
paths produce infinitely many outputs. This weaker notion of ASP can
be defined in our semantics, but it is currently unclear how
to verify this kind of ASP.

Our notion of ASP also describes just the probability of generating
infinitely many outputs, and does not impose any requirement on the
generation rate. Quantitative strengthenings of ASP---say, requiring
bounds on the expected number of steps between outputs---could give
more useful information.

\subparagraph*{Understanding Dependence on Step Relation}
Our coalgebraic semantics supporting our verification methods are
based on a small-step semantics for programs. A natural question is
whether this dependence is necessary, or if one could verify ASP with a less
step-dependent semantics.  Again drawing
an analogy, it appears that fixing a reduction strategy is important
in order to give a well-defined notion of almost sure termination for
probabilistic higher-order languages~(see, e.g.,
\cite{DBLP:conf/esop/LagoG17}). The situation for almost sure productivity is
less clear.

\section{Related Work} \label{sec:rw}
Our work is inspired by two previously independent lines of research:
probabilistic termination and productivity of coalgebraic definitions.

\subparagraph*{Probabilistic Termination}
There are a broad range of techniques for proving termination of probabilistic
programs. Many of the most powerful criteria use advanced tools from probability
theory~\cite{DBLP:journals/pacmpl/McIverMKK18}, especially martingale
theory~\cite{Chakarov-martingale,DBLP:conf/popl/FioritiH15,Chatterjee2016,Chatterjee:2016:AAQ:2837614.2837639,Chatterjee:2017:SIP:3009837.3009873}.
Other works adopt more pragmatic approaches, generally with the goal of
achieving automation. Arons, Pnueli and Zuck~\cite{DBLP:conf/fossacs/AronsPZ03}
reduce almost sure termination of a program $P$ to termination of a
non-deterministic program $Q$, using a planner that must be produced by the
verifier. Esparza, Gaiser and Kiefer~\cite{DBLP:conf/cav/EsparzaGK12} give a
CEGAR-like approach for building patterns (which play a role similar to
planners) and prove that their approach is complete for a natural class of
programs.

\subparagraph*{Productivity of Corecursive Definitions}
There has been a significant amount of work on verifying productivity of
corecursive definitions without probabilistic choice.  Endrullis and
collaborators~\cite{DBLP:journals/tcs/EndrullisGHIK10} give a procedure for
deciding productivity of an expressive class of stream definitions. In a
companion work~\cite{DBLP:conf/lpar/EndrullisGH08}, they study the strength of
data oblivious criteria, i.e., criteria that do not depend on values. More
recently, Komendantskaya and
collaborators~\cite{DBLP:conf/lopstr/KomendantskayaJ16} introduce observational
productivity and give a semi-decision procedure for logic programs.

\section{Conclusion} \label{sec:conclusion}
We introduce almost sure productivity, a counterpart to almost sure
termination for probabilistic coinductive programs. In addition, we
propose two methods for proving ASP for a core language for streams
and infinite trees. Our results demonstrate that verification of ASP
is feasible and can even be decidable for simple languages. 


\subparagraph*{Acknowledgements.}

We thank Benjamin Kaminski, Charles Grellois and Ugo dal Lago for
helpful discussions on preliminary versions of this work, and we are
grateful to the anonymous reviewers for pointing out areas where the
exposition could be improved. This work was initiated at the Bellairs
Research Institute in Barbados, and was partially supported by NSF
grant
\#1637532 and ERC grant ProFoundNet (\#679127).



\bibliography{header,refs}

\end{document}